\documentclass[12pt]{article}
\usepackage{amsmath, amsfonts, amssymb, amsthm, graphicx, geometry, arydshln, umoline, subfig, newtxtext, newtxmath, color, comment, soul, enumerate, bm}
\usepackage{appendix}
\usepackage{tikz}
\usetikzlibrary{positioning}
\usepackage{natbib}
\usepackage[colorlinks=true,citecolor=blue]{hyperref}
\usepackage{booktabs}
\usepackage{makecell}
\usepackage{bbm}

\theoremstyle{definition}
\newtheorem{theorem}{Theorem}

\newtheorem{corollary}{Corollary}

\newtheorem{lemma}{Lemma}

\newtheorem{proposition}{Proposition}

\begin{document}

\title{Coalitional Manipulation and Target Set Correspondences}
\author{
Toshiya Yoshimura\thanks{Undergraduate School of Management, Department of Business Economics, Tokyo University of Science, 1-11-2, Fujimi, Chiyoda-ku, Tokyo, 102-0071, Japan. Email: tossyyossy01@gmail.com}
}
\date{\today}
\maketitle

\begin{abstract}
We study coalitional manipulation of interval-valued social choice correspondences on the single-peaked domain.
In the single-valued model, strategy-proofness and group strategy-proofness coincide within the standard class of peaks-only social choice functions, so generalized median voter rules retain their robustness to coalitional manipulation.
This equivalence breaks down once the outcome is allowed to be set-valued: generalized median correspondences may be strategy-proof while remaining vulnerable to coordinated deviations.
We show that, among generalized median correspondences allowing genuine set-valued choice, pairwise strategy-proofness already forces anonymity.
With voter sovereignty, this restriction becomes sharp and characterizes target set correspondences.
Thus, the target-set structure emerges as the form of set-valued flexibility compatible with pairwise coalitional robustness.

\noindent\textit{JEL classification}: D71.
\newline\noindent\textit{Keywords}: Pairwise strategy-proofness, Single-peaked preferences, Median correspondences, Target set correspondence
\end{abstract}

\section{Introduction}
Single-peaked preferences provide one of the classical environments in which strong positive results on strategy-proof voting are available.
Following the seminal work of \citet{black1948rationale}, a large literature has studied voting on a line when voters have single-peaked preferences.
A central result is due to \citet{Moulin1980}, who characterizes strategy-proof voting schemes in terms of generalized median voter rules; see also \citet{weymark2011unified}.
An important feature of these rules is their robustness to coordinated manipulation, reflecting the close relationship between individual and group strategy-proofness on the single-peaked domain; see also \citet{BarberaBergaMoreno2010}.

The situation becomes less straightforward when the social choice rule is allowed to select a set of alternatives.
\citet{klaus2020strategy} extend generalized median voting to interval-valued social choice correspondences and characterize generalized median correspondences under strategy-proofness and peaks-onliness.
Following their approach, and the decision-theoretic treatment of choice under complete uncertainty in \citet{Bossert2000uncertainty}, we evaluate set-valued outcomes through their best and worst alternatives.
Allowing interval-valued outcomes introduces flexibility that is absent in the single-valued model, since the final alternative may be selected from the chosen interval at a later stage.
At the same time, however, generalized median correspondences need not inherit the coalitional robustness of their single-valued counterparts.
This raises the question of how much set-valued flexibility can be maintained once robustness to coordinated manipulation is imposed.

We study this question by imposing robustness only against deviations by coalitions of at most two voters.
Related pairwise approaches to strategic manipulation are studied by \citet{Serizawa2006}, who considers effective pairwise strategy-proofness and self-enforcing manipulation.
Our pairwise requirement is substantially weaker than immunity to arbitrary coalitional deviations and provides a simple benchmark for robustness to coordinated manipulation.
Perhaps unexpectedly, even this limited requirement has a strong implication for set-valued median voting.
We show that, once genuine set-valued flexibility is allowed, pairwise strategy-proofness forces generalized median correspondences to be anonymous.
This sharply contrasts with the single-valued case, where generalized median voter rules may remain non-anonymous while satisfying strong coalitional incentive requirements.

Our second result connects this implication to the literature on target rules.
For single-valued choice, target point functions have been characterized through solidarity requirements such as population-monotonicity and replacement-dominance; see \citet{Thomson1993target}, \citet{ChingThomson1993}, \citet{Vohra1999}, and \citet{Klaus2001}. \citet{Klaus2020target} extend this approach to set-valued choice and characterize target set correspondences using population-monotonicity.
We provide a different foundation in a fixed-population setting: among genuinely set-valued social choice correspondences, voter sovereignty, strategy-proofness, and pairwise strategy-proofness characterize target set correspondences.
We also compare our main best--worst analysis with optimistic and pessimistic benchmark extensions.

The remainder of the paper is organized as follows.
Section~2 introduces the model, the best--worst extension, and the relevant incentive requirements.
Section~3 presents our main results on endogenous anonymity and the characterization of target set correspondences.
Section~4 discusses benchmark set extensions and related literature.
The proofs are collected in the Appendix.

\section{The Model}
\subsection{Preliminaries}
The set of voters is $N = \{1, \ldots, n\} \ (n \ge 3)$, and the set of alternatives is the interval of real numbers $A =[0,1] \subset \mathbb{R}$.
The preference of each voter $i \in N$ on $A$ is a complete, reflexive, and transitive binary relation $R_i$ on $A$.
Let $P_i$ denote the strict preference relations induced by $R_i$; that is, for all $x,y \in A$, $x P_i y$ if and only if $\lnot y R_i x$.
Similarly, let $I_i$ denote the indifference preference relations induced by $R_i$, namely, for all $x,y \in A$, $x I_i y$ if and only if $x R_i y  \land y R_i x$. 
We also assume that $R_i$ is single-peaked on $A$, i.e.,
each voter $i \in N$ has a unique alternative $\tau(R_i)$, such that for any $x,y \in A$,
\[
y < x\le \tau(R_i) \lor \tau(R_i) \le x < y \Rightarrow x P_i y, 
\]
where $\tau(R_i)$ is the top-ranked alternative in $A$; that is, $\tau(R_i) P_i x$ for all $x \in A \setminus \{\tau(R_i)\}$.
We denote the domain of single-peaked preferences by $\mathcal{SP}$, and let $\mathcal{SP}^n$ be the set of profiles $R := (R_i)_{i \in N}$ such that for each $i \in N, R_i \in \mathcal{SP}$.
Given \(R \in \mathcal{SP}^n\) and \(i \in N\), we write
\(R=(R_i,R_{-i})\), where \(R_i\) denotes agent \(i\)'s preference
relation and \(R_{-i}:=(R_j)_{j\in N\setminus\{i\}}\) denotes the
preference profile of all agents other than \(i\).
Similarly, given \(R \in \mathcal{SP}^n\) and \(C \subseteq N\), we write $R=(R_C,R_{-C}).$
For any $R \in \mathcal{SP}^n$, let $\tau(R)=( \tau(R_1), \ldots, \tau(R_n) )$.
Let $\underline \tau(R)=\min_{i \in N}\tau(R_i)$ and $\overline \tau(R)=\max_{i \in N}\tau(R_i)$.

\subsection{Standard Axioms}

Let $\mathcal{A}$ be the class of all nonempty closed intervals in $A$.
A \textit{social choice function} (SCF) is a function $f:\mathcal{SP}^n\to A$ that assigns a single alternative $f(R)\in A$ to every profile $R\in\mathcal{SP}^n$.
A \textit{social choice correspondence} (SCC) is a function $F:\mathcal{SP}^n\to\mathcal{A}$ that assigns a nonempty closed interval $F(R)\in\mathcal{A}$ to every profile $R\in\mathcal{SP}^n$.

We first introduce peaks-onliness for social choice functions and social choice correspondences.

\paragraph{Peaks-onliness for social choice functions}
An SCF $f:\mathcal{SP}^n\to A$ satisfies \textit{peaks-onliness} if, for every $R,R'\in\mathcal{SP}^n$ such that $\tau(R)=\tau(R')$, we have $f(R)=f(R')$.

\paragraph{Peaks-onliness for social choice correspondences}
An SCC $F:\mathcal{SP}^n\to\mathcal{A}$ satisfies \textit{peaks-onliness} if, for every $R,R'\in\mathcal{SP}^n$ such that $\tau(R)=\tau(R')$, we have $F(R)=F(R')$.

We next introduce two standard requirements concerning the treatment of voters.

\paragraph{Anonymity}
An SCC $F:\mathcal{SP}^n\to\mathcal{A}$ satisfies \textit{anonymity} if, for every $R\in\mathcal{SP}^n$ and every permutation $\sigma:N\to N$, we have
$F(R)=F(R^\sigma)$, where $R^\sigma:=(R_{\sigma(i)})_{i\in N}$.

Anonymity requires the social choice rule to treat voters symmetrically: permuting the identities of voters while leaving the profile of preferences otherwise unchanged does not affect the outcome.

We next impose a mild richness requirement on the range of the social choice rule.

\paragraph{Voter sovereignty}
An SCC $F:\mathcal{SP}^n\to\mathcal{A}$ satisfies \textit{voter sovereignty} if, for every $x\in A$, there exists $R\in\mathcal{SP}^n$ such that
$F(R)=\{x\}$.

Voter sovereignty requires every alternative to be attainable as the unique social choice at some preference profile.
Thus, it rules out social choice correspondences whose range excludes some alternatives.

\subsection{Best-Worst Set Extensions and Strategy-Proofness}

For each voter $i\in N$, we extend the preference relation $R_i$ over alternatives to a relation over intervals.
For each $X\in\mathcal{A}$, let $\underline{X}$ and $\overline{X}$ denote the minimum and maximum elements of $X$, respectively.
For each $R_i\in\mathcal{SP}$ and each $X\in\mathcal{A}$, define the set of best alternatives in $X$ by $b_{R_i}(X):=\{x\in X:xR_i y\text{ for every }y\in X\}$ and the set of worst alternatives in $X$ by $w_{R_i}(X):=\{x\in X:yR_i x\text{ for every }y\in X\}$.

By single-peakedness, $b_{R_i}(X)\subseteq\{\underline{X},\tau(R_i),\overline{X}\}$ and $|b_{R_i}(X)|=1$.
Moreover, $w_{R_i}(X)\subseteq\{\underline{X},\overline{X}\}$.
In particular, if $w_{R_i}(X)=\{\underline{X},\overline{X}\}$, then $\underline{X}\,I_i\,\overline{X}$.
Following \citet{klaus2020strategy}, with a slight abuse of notation, we treat $b_{R_i}(X)$ and $w_{R_i}(X)$ as alternatives whenever no confusion arises.

When a social choice correspondence selects an interval rather than a single alternative, incentive properties depend on how preferences over alternatives are extended to interval-valued outcomes.
We follow \citet{klaus2020strategy}, who study strategy-proofness for interval-valued social choice correspondences on the single-peaked domain using the best--worst extension.
This approach is also related to the decision-theoretic treatment of choice under complete uncertainty in \citet{Bossert2000uncertainty}, in
which the best and worst consequences of a feasible set play a central role.
Accordingly, we evaluate an interval through its best and worst alternatives.

\paragraph{Best--worst extension}
For any $X,Y\in\mathcal{A}$, $X R_i^{BW}Y$ if and only if $b_{R_i}(X)R_i b_{R_i}(Y)$ and $w_{R_i}(X)R_i w_{R_i}(Y)$.

The associated strict relation is defined by $X P_i^{BW}Y$ if and only if $X R_i^{BW}Y$ and either $b_{R_i}(X)P_i b_{R_i}(Y)$ or $w_{R_i}(X)P_i w_{R_i}(Y)$.

The best--worst extension need not be complete.

We first recall the standard incentive requirements for social choice functions.

\paragraph{Strategy-proofness for social choice functions}
A social choice function $f:\mathcal{SP}^n\to A$ is \textit{strategy-proof} if, for every $R\in\mathcal{SP}^n$, every $i\in N$, and every $R_i'\in\mathcal{SP}$, $f(R)R_i f(R_i',R_{-i})$.

\paragraph{Group strategy-proofness for social choice functions}
A social choice function $f:\mathcal{SP}^n\to A$ is \textit{group strategy-proof} if there do not exist $R=(R_C,R_{-C})\in\mathcal{SP}^n$, a nonempty coalition $C\subseteq N$, and $R_C'\in\mathcal{SP}^{|C|}$ such that $f(R_C',R_{-C})R_i f(R)$ for every $i\in C$ and $f(R_C',R_{-C})P_j f(R)$ for some $j\in C$.

Group strategy-proofness rules out coordinated deviations by any coalition in which every member weakly benefits and at least one member strictly benefits from the joint misreport.

We now turn to social choice correspondences.

\paragraph{Strategy-proofness}
Following \citet{klaus2020strategy}, an SCC $F:\mathcal{SP}^n\to\mathcal{A}$ is \textit{strategy-proof} if, for every $R\in\mathcal{SP}^n$, every $i\in N$, and every $R_i'\in\mathcal{SP}$, $F(R)R_i^{BW}F(R_i',R_{-i})$.

Since the best--worst extension is not necessarily complete, strategy-proofness requires the truthful outcome to be weakly preferred to, and hence comparable with, every outcome obtainable by a unilateral deviation.

For coalitional deviations, we adopt a weaker notion of profitable deviation.
A deviation is regarded as profitable if every member of the coalition weakly prefers the deviating outcome to the truthful outcome and at least one member strictly prefers it under the best--worst extension.
Unlike individual strategy-proofness defined above, this requirement does not require the truthful outcome to be weakly preferred to every outcome obtainable by a deviation.
In particular, a deviation leading to an outcome that is incomparable with the truthful outcome does not constitute a violation.
Thus, pairwise strategy-proofness, as defined below, does not in general imply individual strategy-proofness.

\paragraph{Pairwise strategy-proofness}
An SCC $F:\mathcal{SP}^n\to\mathcal{A}$ is \textit{pairwise strategy-proof} if there do not exist $R=(R_C,R_{-C})\in\mathcal{SP}^n$, a nonempty coalition $C\subseteq N$ with $|C|\leq2$, and $R_C'\in\mathcal{SP}^{|C|}$ such that $F(R_C',R_{-C})R_i^{BW}F(R)$ for every $i\in C$ and $F(R_C',R_{-C})P_j^{BW}F(R)$ for some $j\in C$.

\section{Results}

\subsection{Pairwise Strategy-Proofness and Anonymity}

We first recall the class of generalized median voter rules.

Let $\alpha=(\alpha_M)_{M\subseteq N}\in A^{2^n}$ satisfy $\alpha_L\geq\alpha_M$ for every $L\subseteq M\subseteq N$.
For each $R\in\mathcal{SP}^n$, let $\pi:N\to N$ be a permutation such that $\tau(R_{\pi(1)})\leq\cdots\leq\tau(R_{\pi(n)})$, and define
$\widetilde{\alpha}_{R}
:=
(\alpha_{\emptyset},
\alpha_{\{\pi(1)\}},
\alpha_{\{\pi(1),\pi(2)\}},
\ldots,
\alpha_N)$.

The \textit{generalized median voter rule} associated with $\alpha$, denoted by $f_G^\alpha$, is defined, for every $R\in\mathcal{SP}^n$, by $f_G^\alpha(R)
=
\operatorname{med}
\{\tau(R_1),\ldots,\tau(R_n),
\alpha_{\emptyset},
\alpha_{\{\pi(1)\}},
\ldots,
\alpha_N\}$.

\begin{proposition}[Moulin \citeyearpar{Moulin1980}]
\label{prop_moulin}
For any social choice function $f:\mathcal{SP}^n\to A$, the following statements are equivalent:
\begin{enumerate}
    \item $f$ satisfies strategy-proofness and peaks-onliness;
    \item $f$ satisfies group strategy-proofness and peaks-onliness;
    \item $f$ is a generalized median voter rule.
\end{enumerate}

\end{proposition}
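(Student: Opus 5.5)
The plan is the cycle $(3)\Rightarrow(2)\Rightarrow(1)\Rightarrow(3)$. Two of these links are routine and one is the substance of Moulin's theorem.

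\emph{Step $(2)\Rightarrow(1)$.} This is immediate. Group strategy-proofness with $|C|=1$ rules out any profitable unilateral deviation. Since the underlying order $R_i$ is complete, this is exactly strategy-proofness.

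\emph{Step $(1)\Rightarrow(3)$.} This is the substantive part of Moulin's characterization, and I would follow his argument.

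First, I use peaks-onliness to write $f(R)=g(\tau(R))$ for a function $g:A^n\to A$.

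Second, I show $g$ is monotone and uncompromising. Fix the others' peaks and vary voter $i$'s peak. Single-peakedness lets $i$ report a preference with peak $t$ that ranks points on either side in any prescribed way. Strategy-proofness then gives two facts:
\begin{itemize}
\item if $g(t,\tau_{-i})\neq t$, the outcome does not move as $i$'s peak moves on the same side of it;
\item $g$ is weakly increasing in each coordinate.
\end{itemize}

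Third, I define the phantoms $\alpha_M:=g(\tau)$, where $\tau_i=0$ for $i\in M$ and $\tau_i=1$ for $i\notin M$. Monotonicity of $g$ gives $\alpha_L\ge\alpha_M$ whenever $L\subseteq M$.

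Fourth, I prove $g=f_G^\alpha$ by induction on the number of voters whose peaks lie in $(0,1)$. I move each such voter's peak from an extreme value to its true value, one voter at a time. The uncompromising property shows the outcome either stays fixed or tracks that voter's peak, and this is exactly what the median formula with the phantoms $\widetilde\alpha_R$ prescribes.

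I expect this induction to be the main obstacle. The difficulty is bookkeeping which phantom $\alpha_{\{\pi(1),\dots,\pi(k)\}}$ is active as peaks cross the current outcome. I would handle it by first establishing the two-region statement: $g$ is constant while the moving peak stays on one side of the outcome, and equals the peak otherwise.

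\emph{Step $(3)\Rightarrow(2)$.} Peaks-onliness holds by definition. For group strategy-proofness, let $x=f_G^\alpha(R)$ and suppose a coalition $C$ deviates to some $y\neq x$, say $y>x$. The voters with $\tau(R_i)\ge x$ already get their peak or something on the correct side. Since the outcome changes monotonically in reported peaks, only voters with $\tau(R_i)\le x$ can raise the outcome, and only by reporting peaks above their true peaks.

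If $x$ equals a voter's peak, or is pinned by a phantom, raising it requires some voter with $\tau(R_j)\le x$ to lie upward. That voter strictly loses from $y>x$, so no member of $C$ can strictly gain while all weakly gain. The one point needing care is that changing peaks alters $\pi$ and hence $\widetilde\alpha_R$. Monotonicity of $\alpha$ in $M$ handles this: moving a peak upward only removes voters from the lower sets, which can only raise phantoms. The argument for $y<x$ is symmetric.
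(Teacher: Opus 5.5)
The paper gives no proof of this proposition; it is quoted from Moulin (1980). The comparison is therefore with the standard argument, whose architecture you reproduce. Step $(2)\Rightarrow(1)$ is fine, and so is everything you say about an arbitrary rule $g$: monotonicity, uncompromisingness via richness of $\mathcal{SP}$, and the phantoms $\alpha_M$ with $\alpha_L\ge\alpha_M$.

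\textbf{The gap.} You never correctly establish the corresponding properties of the median formula $f_G^\alpha$ itself, and this is used in two of your three links.

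In $(3)\Rightarrow(2)$ you assert that raising a peak ``only removes voters from the lower sets, which can only raise phantoms.'' This is false. When voter $i$ passes voter $k$, the lower set $S_r=\{\pi(1),\dots,\pi(r)\}$ changes from $T\cup\{i\}$ to $T\cup\{k\}$. This is a replacement between non-nested sets, and monotonicity of $\alpha$ says nothing about it. For example, let $i$ and $k$ hold the two lowest peaks and let $i$ move just above $k$. Then the phantom $\alpha_{\{\pi(1)\}}$ drops from $\alpha_{\{i\}}$ to $\alpha_{\{k\}}$, which may be $0.8$ and $0.2$. The outcome is still monotone, but not for your reason.

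Moreover, monotonicity alone would not give ``only voters with $\tau(R_i)\le x$ can raise the outcome.'' A monotone rule may let a voter with peak above $x$ push the outcome up by exaggerating. What you need is uncompromisingness of $f_G^\alpha$, in a version valid when several peaks move at once.

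This is also exactly the ``bookkeeping'' you defer in step 4 of $(1)\Rightarrow(3)$. Closing the induction requires
$f_G^\alpha(\tau)=\operatorname{med}\{f_G^\alpha(0,\tau_{-i}),\tau_i,f_G^\alpha(1,\tau_{-i})\}$,
the same identity you derive for $g$.

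\textbf{A repair.} Use an order-free representation. Write $\tau_h$ for $\tau(R_h)$, and set $\min\emptyset:=1$ and $\tau_{\pi(n+1)}:=1$. Then
$f_G^\alpha(R)=\max_{T\subseteq N}\min\{\alpha_T,\min_{h\notin T}\tau_h\}$.
To see this, take any $T$ and let $r$ be maximal with $S_r\subseteq T$. Then $\alpha_T\le\alpha_{S_r}$ and $\min_{h\notin T}\tau_h\le\tau_{\pi(r+1)}$, so the maximum is attained along the chain $S_0\subset\cdots\subset S_n$. With increasing peaks against decreasing phantoms, $\max_r\min\{\alpha_{S_r},\tau_{\pi(r+1)}\}$ is the median of the $2n+1$ numbers.

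From this form the three needed facts follow directly.
\begin{enumerate}
\item Monotonicity is immediate.
\item Splitting $T$ by whether $i\in T$ gives $f_G^\alpha=\max\{A_i,\min\{B_i,\tau_i\}\}$, with $A_i,B_i$ independent of $\tau_i$. Hence $A_i=f_G^\alpha(0,\tau_{-i})$ and $\max\{A_i,B_i\}=f_G^\alpha(1,\tau_{-i})$, which is exactly the identity your induction needs.
\item Group strategy-proofness. Let $x$ be the truthful outcome. Suppose every member of $C$ has peak $>x$. Each term is $\le x$ at $R$, so it is bounded either by $\alpha_T\le x$ or by a peak $\tau_h\le x$ with $h\notin T$, necessarily a non-member. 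Hence every term, and so the outcome, stays $\le x$ after the deviation. Symmetrically, suppose every member has peak $<x$. Then any maximizing $T^*$ contains $C$, so its term $x$ survives the deviation.
\end{enumerate}
Thus any coalition that moves the outcome to $y>x$ (resp.\ $y<x$) contains a member with peak $\le x$ (resp.\ $\ge x$), and that member strictly prefers $x$.
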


Proposition~\ref{prop_moulin} highlights a distinctive feature of the single-valued single-peaked model.
Within the class of peaks-only social choice functions, individual strategy-proofness is already equivalent to group strategy-proofness.
Thus, generalized median voter rules do not lose their strategic robustness when the manipulation concept is strengthened from unilateral to coalitional deviations.
This equivalence provides a benchmark for our analysis.

\paragraph{Generalized median correspondences}
Let $\alpha=(\alpha_M)_{M\subseteq N}$ and $\beta=(\beta_M)_{M\subseteq N}$ be two vectors in $A^{2^n}$ such that $\alpha_M\leq\beta_M$ for every $M\subseteq N$, and $\alpha_L\geq\alpha_M$ and $\beta_L\geq\beta_M$ for every $L\subseteq M\subseteq N$.

For each $R\in\mathcal{SP}^n$, let $\pi:N\to N$ be a permutation such that $\tau(R_{\pi(1)})\leq\cdots\leq\tau(R_{\pi(n)})$, and define $\widetilde{\alpha}_{R}:=(\alpha_{\emptyset},\alpha_{\{\pi(1)\}},\alpha_{\{\pi(1),\pi(2)\}},\ldots,\alpha_N)$ and $\widetilde{\beta}_{R}:=(\beta_{\emptyset},\beta_{\{\pi(1)\}},\beta_{\{\pi(1),\pi(2)\}},\ldots,\beta_N)$.

For two vectors $x$ and $y$, let $\operatorname{med}(x,y)$ denote the median of all coordinates of $x$ and $y$.
The \textit{generalized median correspondence} associated with $(\alpha,\beta)$, denoted by $F_G^{\alpha,\beta}$, is defined, for every $R\in\mathcal{SP}^n$, by $F_G^{\alpha,\beta}(R):=[\operatorname{med}(\tau(R),\widetilde{\alpha}_{R}),\operatorname{med}(\tau(R),\widetilde{\beta}_{R})]$.

When some voters have the same peak, the value of $F_G^{\alpha,\beta}(R)$ does not depend on the particular permutation $\pi$ used to order them.

\begin{proposition}[\citet{klaus2020strategy}]
\label{prop_klaus}
For any SCC $F:\mathcal{SP}^n\to\mathcal{A}$, the following statements are equivalent:
\begin{enumerate}
    \item $F$ satisfies strategy-proofness and peaks-onliness;
    \item $F$ is a generalized median correspondence.
\end{enumerate}
\end{proposition}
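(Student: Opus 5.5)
The plan is to prove the two directions separately, reducing each to the single-valued case in Proposition~\ref{prop_moulin} by looking at the two endpoint functions. For an SCC $F$ define $\underline f(R):=\min F(R)$ and $\overline f(R):=\max F(R)$. Throughout I use that for an interval $X=[a,b]$ and a voter with peak $t$, $b_{R_i}(X)$ is $t$ clamped to $[a,b]$, and $w_{R_i}(X)$ is whichever of $a,b$ is worse under $R_i$.

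\emph{(2)$\Rightarrow$(1).} Peaks-onliness is immediate, since $F_G^{\alpha,\beta}$ depends only on $\tau(R)$. Each endpoint $\operatorname{med}(\tau(R),\widetilde\alpha_R)$ and $\operatorname{med}(\tau(R),\widetilde\beta_R)$ is a generalized median voter rule. I would use the standard uncompromisingness property of such rules. Let $t=\tau(R_i)$.
\begin{itemize}
\item If $f(R)<t$, any misreport by $i$ leaves $f$ unchanged or moves it weakly further left.
\item Symmetrically, if $f(R)>t$, any misreport leaves $f$ unchanged or moves it weakly further right.
\item If $f(R)=t$, a misreport can move $f$ anywhere, but never closer to $t$.
\end{itemize}
Hence under a unilateral deviation both endpoints of $F(R)$ weakly move away from $t$, on their own side of $t$. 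It then remains to check, by cases on the position of $t$ relative to $[a,b]=F(R)$, that the new interval $[a',b']$ is weakly worse on both the best and the worst alternative:
\begin{itemize}
\item $t<a$: both endpoints move right, so $b_{R_i}$ and $w_{R_i}$ both weakly worsen.
\item $a\le t\le b$: the best stays $t$ or gets worse, and since $a'\le a$ and $b'\ge b$ the worst gets weakly worse.
\item $t>b$: symmetric to the first case.
\end{itemize}
This gives $F(R)R_i^{BW}F(R_i',R_{-i})$.

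\emph{(1)$\Rightarrow$(2).} The core step is to show that $\underline f$ and $\overline f$ are strategy-proof SCFs. They are peaks-only because $F$ is. Suppose, say, $\underline f$ is manipulable: voter $i$ with peak $t$ moves $a=\underline f(R)$ to $a'=\underline f(R_i',R_{-i})$ with $a'\,P_i\,a$. Peaks-onliness lets me replace $R_i$ by any single-peaked relation with the same peak without changing $F(R)$ or $F(R_i',R_{-i})$. I would choose this shape to force a strict BW-improvement or an incomparability, and either one violates strategy-proofness of $F$.

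For example, take $a<a'\le t$. Pick $R_i$ so steep on the left that $a$ is worse than every point of $[\min(a',t),1]$. Then $w_{R_i}(F(R))=a$, while the new worst element ($a'$ or $b'$) is strictly better. So either $F(R')$ is strictly BW-better, or the best comparison goes the other way and the two intervals are BW-incomparable; in both cases $F(R)R_i^{BW}F(R')$ fails.

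The remaining configurations, such as $a'$ jumping across $t$ and the analogous cases for $\overline f$, use the mirror-image choice of steepness. A peak placed inside or outside $[a',b']$ is used to make the best element strictly better when the worst element cannot be controlled. I expect this case analysis, tuning the shape of $R_i$ when both endpoints move simultaneously and possibly cross $t$, to be the main obstacle.

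Once $\underline f$ and $\overline f$ are strategy-proof and peaks-only, Proposition~\ref{prop_moulin} gives vectors $\alpha$ and $\beta$ with $\underline f=f_G^\alpha$ and $\overline f=f_G^\beta$. The monotonicity $\alpha_L\ge\alpha_M$ for $L\subseteq M$, and likewise for $\beta$, is part of that representation. Finally, I would identify $\alpha_M=\underline f(R)$ and $\beta_M=\overline f(R)$ at a profile where the voters in $M$ have peak $0$ and all others have peak $1$. Since $\underline f\le\overline f$ everywhere, this yields $\alpha_M\le\beta_M$, so $F=F_G^{\alpha,\beta}$.
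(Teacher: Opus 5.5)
The paper does not prove this proposition; it quotes it from \citet{klaus2020strategy}, so there is no in-paper argument to compare with. Your proposal is a correct, self-contained proof. What you are really showing is that, for a peaks-only interval-valued SCC, best--worst strategy-proofness is equivalent to both endpoint functions $\underline f,\overline f$ being uncompromising, i.e.\ strategy-proof peaks-only SCFs. Proposition~\ref{prop_moulin} then supplies $\alpha$ and $\beta$. Your evaluation at the profile with peaks $0$ on $M$ and $1$ on $N\setminus M$ correctly gives $\alpha_M=\underline f(R)\le\overline f(R)=\beta_M$.

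Two points need tidying.

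\emph{(2)$\Rightarrow$(1), boundary case.} Two of your statements are false when an endpoint equals the peak: that both endpoints move \emph{on their own side of $t$}, and the justification \emph{since $a'\le a$ and $b'\ge b$} in the middle case. If $a=t<b$, a misreport can send $a'$ to either side of $t$. The conclusion still holds. The best alternative cannot improve on $t$. Also $b'\ge b>t$ gives $b\,R_i\,b'\,R_i\,w_{R_i}([a',b'])$, while $w_{R_i}([t,b])=b$.

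\emph{(1)$\Rightarrow$(2), the case analysis.} The part you flag as the main obstacle is short. A violation of $F(R)R_i^{BW}F(R')$ needs only one component to improve strictly, and there are just four configurations to exclude.

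(i) $a<t$ and $a'>a$. Your choice of $R_i$, with $a$ worse than every point of $[\min(a',t),1]$, already covers $a'>t$. Both $a'$ and $b'$ lie in that set, while $w_{R_i}([a,b])=a$. So no separate jump-across-$t$ case is needed.

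(ii) $b>t$ and $b'<b$. Take the mirror choice: $b$ worse than every point of $[0,\max(b',t)]$. Then $w_{R_i}([a,b])=b$, and the worst element of $F(R')$ is strictly better.

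(iii) $a>t$ and $a'<a$. Now $b_{R_i}(F(R))=a$. The best element $b_{R_i}(F(R'))$ is $t$, or $a'\in(t,a)$, or $b'<t$. The first two are strictly better than $a$ for every $R_i$ with peak $t$. In the third, choose $R_i$ steep on the right so that $b'\,P_i\,a$.

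(iv) $b<t$ and $b'>b$. This is the mirror of (iii).

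These choices are legitimate: $F(R_i',R_{-i})$ does not depend on $R_i$, and by peaks-onliness $F(R)$ depends on $R_i$ only through $t$. Simultaneous movement of both endpoints never needs separate treatment.
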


Unlike in the single-valued case, strategy-proofness does not generally imply coalitional strategy-proofness for generalized median correspondences.
Indeed, \citet[Example~2]{klaus2020strategy} provide a generalized median correspondence that is strategy-proof but fails group strategy-proofness; their counterexample is generated by a coalition of only two voters.
Thus, the gap between individual and coalitional incentive compatibility already arises at the pairwise level.

We are interested in the strategic implications that arise specifically from set-valued choice.
As the preceding discussion shows, the gap between individual and coalitional strategy-proofness is a genuinely set-valued phenomenon.
If an SCC always selects a singleton, it can be identified with a social choice function, and the relevant incentive properties reduce to those
studied in Proposition~\ref{prop_moulin}.
We therefore exclude this single-valued case and impose the following minimal requirement.

\paragraph{Minimal flexibility}
An SCC $F:\mathcal{SP}^n\to\mathcal{A}$ satisfies \textit{minimal flexibility} if there exists $R\in\mathcal{SP}^n$ such that $\underline{F(R)}<\overline{F(R)}$.

Thus, minimal flexibility only requires that $F$ select a non-singleton interval at some preference profile.

Let $\mathcal{G}$ denote the class of all generalized median correspondences satisfying minimal flexibility.

Generalized median correspondences may treat voters asymmetrically, since the parameters $\alpha_M$ and $\beta_M$ may depend on the identity of the coalition $M$.
Thus, strategy-proofness and peaks-onliness alone do not guarantee equal treatment of voters.
Our first result shows that this asymmetry disappears once pairwise strategic robustness is imposed on minimally flexible generalized median correspondences.

\begin{theorem}
\label{thm_anonymity}
Let $F\in\mathcal{G}$.
If $F$ satisfies pairwise strategy-proofness, then $F$ is anonymous.
\end{theorem}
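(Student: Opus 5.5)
The plan is to show that the parameters $\alpha_M$ and $\beta_M$ of $F=F_G^{\alpha,\beta}$ depend only on $|M|$, at least at all coalitions where they affect the outcome. Since $F_G^{\alpha,\beta}$ only evaluates $\widetilde\alpha_R,\widetilde\beta_R$ along the ``bottom-$k$'' coalitions of the ordered peaks, this is equivalent to anonymity. Any two coalitions of equal size are linked by a chain of single swaps. It therefore suffices to fix $S\subseteq N$ and $i,j\notin S$, and to show $\alpha_{S\cup\{i\}}=\alpha_{S\cup\{j\}}$ and $\beta_{S\cup\{i\}}=\beta_{S\cup\{j\}}$. Arguing by contradiction, I would assume, say, $\beta_{S\cup\{i\}}\neq\beta_{S\cup\{j\}}$ (the $\alpha$ case is symmetric) and construct a profitable deviation by the pair $\{i,j\}$.

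The construction uses a family of test profiles. Voters in $S$ report peak $0$, voters in $N\setminus(S\cup\{i,j\})$ report peak $1$, and $i,j$ report peaks $x<y$. For such profiles the standard computation of a generalized median reduces $\operatorname{med}(\tau(R),\widetilde\beta_R)$ to clamping the ``pivotal'' cutoff. The upper endpoint equals $\beta_{S\cup\{i\}}$ whenever $x\le\beta_{S\cup\{i\}}\le y$ (with the convention that this coalition is the bottom-$(|S|+1)$ set). If instead $i$ reports $y$ and $j$ reports $x$, the relevant cutoff becomes $\beta_{S\cup\{j\}}$. I will choose $x,y$ so that both cutoffs lie in $[x,y]$ while the lower endpoint is the same in both profiles. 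The lower endpoint is governed by $\alpha$ cutoffs, which, if the $\alpha$ parameters are equal or suitably bounded, can be placed below $x$ or held fixed. The swap then changes only the upper endpoint, between $b:=\beta_{S\cup\{i\}}$ and $b':=\beta_{S\cup\{j\}}$.

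Next I exploit the fact that $F$ is peaks-only while the best--worst extension depends on the whole preference. Because of this, I can freely choose the non-peak parts of $R_i$ and $R_j$. For the voter whose peak lies inside both intervals, the best element is unchanged. I then choose her preference so that the worst element is the common lower endpoint on both sides, which makes her indifferent under $R^{BW}$. For the other voter I choose a preference under which the worst element is the upper endpoint, so that moving it toward her peak yields a strict improvement, with the best element unchanged or improved. The deviation is then weakly beneficial for both and strictly for one, contradicting pairwise strategy-proofness. The direction of the swap, from truthful to deviating, is chosen according to the sign of $b-b'$.

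The main obstacle is non-degeneracy. The argument requires the interval to have a lower endpoint strictly below the relevant peaks, so that worst-element comparisons can be engineered. It also requires cutoffs in configurations where $\alpha$ and $\beta$ interact, for instance when $\alpha_{S\cup\{i\}}$ also differs or when $b$ or $b'$ coincides with an $\alpha$ value. This is where minimal flexibility enters. I would first use pairwise strategy-proofness, together with one profile where $\underline{F(R)}<\overline{F(R)}$, to propagate strict gaps $\alpha_M<\beta_M$ to the coalitions involved in the chain of swaps. I expect this to go by a similar two-voter deviation showing that a singleton collapse at one coalition next to a gap elsewhere is manipulable. I would then handle the boundary cases $b\in\{0,1\}$, and the case where the $\alpha$ and $\beta$ asymmetries must be treated simultaneously, by a separate small case analysis on the test profiles.
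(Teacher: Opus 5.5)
Your swap step rests on an inequality between coalitions of different sizes, and your plan never secures it. Take your test profiles ($S$ at $0$, $i$ at $x$, $j$ at $y$, the rest at $1$) with $\alpha_{S\cup\{i\}},\alpha_{S\cup\{j\}}\le x\le\min\{b,b'\}$ and $\max\{b,b'\}\le y$. The two outcomes are $[L,b]$ and $[L,b']$ with $L=\min\{\alpha_S,x\}$. More generally, $L\ge\min\{\alpha_S,x\}$ wherever you place $S$ at peaks below $x$.

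\begin{itemize}
\item The swapper with peak $y$ has $L$ as her worst element whatever her preference. She can gain only through her best element, i.e.\ only if the upper endpoint rises from $b$ to $b'>b$.
\item The swapper with peak $x$ can then be made indifferent only if $L<x$. If $L=x$, her worst element is the upper endpoint, and in either direction of the swap one of the two swappers is strictly hurt.
\end{itemize}

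So you need $\alpha_S<\min\{\beta_{S\cup\{i\}},\beta_{S\cup\{j\}}\}$, and for the $\alpha$-swap the mirror condition $\beta_{S\cup\{i,j\}}>\max\{\alpha_{S\cup\{i\}},\alpha_{S\cup\{j\}}\}$. Propagating gaps $\alpha_M<\beta_M$ one coalition at a time does not deliver these, and they do fail. The parameter $\alpha_\emptyset$ is unrestricted; it equals $1$ for every target set correspondence. Hence for $S=\emptyset$ your construction cannot detect $\beta_{\{i\}}\neq\beta_{\{j\}}$. Symmetrically, the $\alpha$-swap breaks down at coalitions of size $n-1$ when $\beta_N$ is small. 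What you defer to ``non-degeneracy'' and ``a small case analysis'' is the actual content of the theorem. You also do not say how to obtain a gap at an interior coalition when minimal flexibility only gives $\alpha_\emptyset<\beta_\emptyset$ or $\alpha_N<\beta_N$.

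The missing idea is rigidity between \emph{nested} coalitions that differ by one voter, rather than between coalitions of equal size. Suppose $K=M\cup\{j\}$ with $\alpha_K<\beta_K$ but $\alpha_M>\alpha_K$.
\begin{itemize}
\item Take $\alpha_K<x<t<\min\{\alpha_M,\beta_K\}$. Put $M$ at $x$, put $j$ at $t$ with $xP_j\beta_K$, and the rest at $1$. Then $F(R)=[t,\beta_K]$.
\item If $j$ reports $x$, the outcome becomes $[x,\beta_K]$. Any $i\in M$ strictly gains and $j$ is indifferent, so $\{i,j\}$ has a profitable deviation.
\item Analogous constructions and reflection of $[0,1]$ handle the $\beta$ parameters and the case $K\subset M$.
\end{itemize}
You also need a first step that produces an interior gap. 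Suppose $\alpha_M=\beta_M$ for all nonempty $M\subsetneq N$ and, say, $\alpha_\emptyset<\beta_\emptyset$. Let one voter have peak $t\in(\alpha_\emptyset,\beta_\emptyset)$ and the others peak $1$. Her reporting $1$ moves the outcome from $[\alpha_\emptyset,t]$ to $[\alpha_\emptyset,\beta_\emptyset]$, which is a profitable pairwise deviation. For $n\ge3$, all nonempty proper coalitions are connected by single additions and removals. This yields $\alpha_M=c<d=\beta_M$ for every nonempty $M\subsetneq N$, which is stronger than size-dependence and makes the swap argument unnecessary.
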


Theorem~\ref{thm_anonymity} is stronger than the anonymity conclusion alone suggests.
In fact, its proof shows that there exist $c,d\in A$ with $c<d$ such that $\alpha_M=c$ and $\beta_M=d$
for every nonempty coalition $M\subsetneq N$.
Thus, pairwise strategy-proofness eliminates not only non-anonymity of the social choice correspondence, but all dependence of the interior median parameters on the identities and composition of coalitions.
Only the two boundary parameters associated with $\emptyset$ and $N$ may remain unrestricted.

This restriction is particularly striking from the perspective of strategy-proof social choice.
Dictatorial rules provide a familiar example of strategy-proof rules that are maximally non-anonymous; indeed, in the single-valued single-peaked model, strong coalitional incentive compatibility is compatible with such asymmetry.
Proposition~\ref{prop_moulin} more generally allows non-anonymous generalized median voter rules to satisfy group strategy-proofness.
By contrast, once genuine set-valued flexibility is introduced, robustness against deviations by coalitions of only two voters already eliminates this asymmetry and makes anonymity an endogenous consequence
of incentive compatibility.

\subsection{Characterizing Target Set Correspondences}

The following proposition follows from Theorem~2 of \citet{klaus2020strategy}.

\begin{proposition}
\label{prop_sp_vs_po}
If an SCC $F:\mathcal{SP}^n\to\mathcal{A}$ satisfies strategy-proofness and voter sovereignty, then $F$ satisfies peaks-onliness.
\end{proposition}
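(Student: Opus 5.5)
The plan is to invoke Theorem~2 of \citet{klaus2020strategy} directly, and to spell out the reduction so it is clear why voter sovereignty is exactly the hypothesis required. As I recall it, that theorem shows that a strategy-proof SCC whose range contains every singleton $\{x\}$, $x\in A$ (unanimity-type or range conditions of this kind), is peaks-only and hence, by Proposition~\ref{prop_klaus}, a generalized median correspondence. The first step is therefore to check that voter sovereignty, as defined here, is (or implies) the range condition used in their theorem. After that, peaks-onliness follows by citation.

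If the cited hypothesis is not literally the same, I would argue directly. Fix $R$ and $R'$ with $\tau(R)=\tau(R')$, and replace one voter at a time. It then suffices to show $F(R_i,R_{-i})=F(R_i',R_{-i})$ whenever $\tau(R_i)=\tau(R_i')=t$.

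By strategy-proofness applied in both directions, $X:=F(R_i,R_{-i})$ and $Y:=F(R_i',R_{-i})$ satisfy $X\,R_i^{BW}Y$ and $Y\,R_i'^{BW}X$. Since best elements depend only on position relative to $t$, comparing best elements under both preferences forces $b(X)$ and $b(Y)$ to be equally close to $t$ on the relevant side. If either interval contains $t$, then both do. Comparing worst elements gives restrictions on the endpoints. When both intervals lie on the same side of $t$, single-peakedness plus the two-way comparison pins down both endpoints. The delicate case is when $X$ and $Y$ straddle $t$ with endpoints on opposite sides. There, one can choose the two preferences to rank the far endpoints differently, so the two-way comparisons could in principle be satisfied by distinct intervals.

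To rule this out, I would use voter sovereignty. It supplies, for each $x$, a profile with $F=\{x\}$. Using these profiles, I would first establish a unanimity-type property: if all peaks equal $x$, then $F=\{x\}$. This follows by moving voters one at a time from the sovereignty profile to peak $x$; strategy-proofness forces each step to keep outcome $\{x\}$, since $\{x\}$ is strictly best for a voter with peak $x$. I would then use profiles where voter $i$ reports various preferences with peak $t$ to derive monotonicity of the endpoints $\underline{F}$ and $\overline{F}$ in reported peaks. This rules out dependence on the non-peak shape.

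The main obstacle is this straddling case. It is exactly where Klaus and Protopapas needed their range condition. I would rely on their Theorem~2 for it rather than reprove it, and only verify that voter sovereignty delivers the needed singleton-range hypothesis.
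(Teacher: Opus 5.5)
Your main plan is the same as the paper's. The paper gives no independent argument and derives the proposition directly from Theorem~2 of \citet{klaus2020strategy}, just as you propose. Your direct fallback sketch is not needed, though your observation that strategy-proofness plus voter sovereignty yields unanimity is correct; its straddling case is left to the same citation, so the cited theorem does all the work in both versions.
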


Hence, by Propositions~\ref{prop_sp_vs_po} and~\ref{prop_klaus}, every strategy-proof SCC satisfying voter sovereignty is a generalized median correspondence.

\paragraph{Efficient median correspondences}
Let $a=(a_1,\ldots,a_{n-1})$ and $b=(b_1,\ldots,b_{n-1})$ be two vectors in $A^{n-1}$ such that $a_1\leq\cdots\leq a_{n-1}$, $b_1\leq\cdots\leq b_{n-1}$, and $a_k\leq b_k$ for every $k\in\{1,\ldots,n-1\}$.

The \textit{efficient median correspondence} associated with $(a,b)$, denoted by $F_M^{a,b}$, is defined, for every $R\in\mathcal{SP}^n$, by
$F_M^{a,b}(R):=[\operatorname{med}\{\tau(R_1),\ldots,\tau(R_n),a_1,\ldots,a_{n-1}\},\operatorname{med}\{\tau(R_1),\ldots,\tau(R_n),b_1,\ldots,b_{n-1}\}]$.

\begin{proposition}[\citet{klaus2020strategy}]
\label{prop_eff_median}
For any SCC $F:\mathcal{SP}^n\to\mathcal{A}$, the following statements are equivalent:
\begin{enumerate}
    \item $F$ satisfies strategy-proofness, voter sovereignty, and anonymity;
    \item $F$ is an efficient median correspondence.
\end{enumerate}
\end{proposition}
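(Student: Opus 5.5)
The plan is to prove the two implications separately. Throughout I use the elementary fact that a generalized median voter rule $f_G^\alpha$ is pinned down by its values on the ``extreme'' profiles. If the voters in $M$ have peak $0$ and all others have peak $1$, then $f_G^\alpha$ equals $\alpha_M$. Conversely, two generalized median voter rules that agree on all such $\{0,1\}$-profiles coincide; I would get this from the standard max--min representation $f_G^\alpha(R)=\max_{M}\min\{\alpha_M,\min_{i\notin M}\tau(R_i)\}$, read with the convention matching the definition above. Since $F_G^{\alpha,\beta}$ is just the pair $(f_G^\alpha,f_G^\beta)$, everything reduces to statements about its two endpoint rules.

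\emph{(2)$\Rightarrow$(1).}
\begin{itemize}
\item Given $(a,b)$, define $\alpha_\emptyset=\beta_\emptyset=1$ and $\alpha_N=\beta_N=0$. For every nonempty $M\subsetneq N$, set $\alpha_M=a_{n-|M|}$ and $\beta_M=b_{n-|M|}$.
\item The monotonicity $\alpha_L\ge\alpha_M$ for $L\subseteq M$ follows because $a$ is nondecreasing. The condition $\alpha_M\le\beta_M$ follows from $a_k\le b_k$.
\item The rule $f_G^\alpha$ and the phantom median $R\mapsto\operatorname{med}\{\tau(R_1),\dots,\tau(R_n),a_1,\dots,a_{n-1}\}$ are both generalized median voter rules. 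They agree on every $\{0,1\}$-profile, since both give $a_{n-|M|}$ when the voters in $M$ have peak $0$. Hence they coincide, and likewise for $\beta$ and $b$.
\item So $F_M^{a,b}=F_G^{\alpha,\beta}$, and Proposition~\ref{prop_klaus} gives strategy-proofness.
\item Anonymity is immediate from the symmetric formula.
\item For voter sovereignty, take the profile where every peak equals $x$. Then $n$ of the $2n-1$ numbers in each median equal $x$, so both endpoints equal $x$ and $F(R)=\{x\}$.
\end{itemize}

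\emph{(1)$\Rightarrow$(2).}
\begin{itemize}
\item By Proposition~\ref{prop_sp_vs_po}, $F$ is peaks-only. By Proposition~\ref{prop_klaus}, $F=F_G^{\alpha,\beta}$ for some admissible $(\alpha,\beta)$.
\item \emph{Parameters depend only on coalition size.} Let $M,M'$ be nonempty proper coalitions with $|M|=|M'|$. The $\{0,1\}$-profile with zeros on $M$ is a permutation of the one with zeros on $M'$. By anonymity, and the evaluation of the endpoints at extreme profiles, $\alpha_M=\alpha_{M'}$ and $\beta_M=\beta_{M'}$.
\item \emph{Boundary parameters.} Voter sovereignty gives, for each $x$, a profile $R$ with $F(R)=\{x\}$. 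I would use this, together with monotonicity of the endpoint rules in the peaks, to force $\alpha_\emptyset=\beta_\emptyset=1$ and $\alpha_N=\beta_N=0$. For instance, if $\beta_N>0$, then every outcome has upper endpoint at least $\beta_N$, so $\{0\}$ is never chosen. The cases $\alpha_\emptyset<1$ and the remaining two parameters are handled symmetrically.
\item \emph{Conclusion.} Set $a_k:=\alpha_M$ and $b_k:=\beta_M$ for any $M$ with $|M|=n-k$. These vectors are nondecreasing with $a_k\le b_k$. The identity argument from the first implication, run in reverse, gives $F=F_M^{a,b}$.
\end{itemize}

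The main obstacle is the identification step, namely showing that a generalized median rule with size-dependent interior parameters and boundary values $1,0$ equals the $(n-1)$-phantom median. I would handle it through the extreme-profile determination lemma rather than by direct median counting, since the latter is fiddly with ties and the ordering $\pi$. The boundary-parameter step needs only the care that voter sovereignty forces all four extreme values.
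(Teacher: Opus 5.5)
Your argument is correct. Note, though, that the paper never proves this proposition; it imports it from \citet{klaus2020strategy}, so there is no in-paper proof to match. Deriving it from Propositions~\ref{prop_klaus} and~\ref{prop_sp_vs_po}, which are themselves imported, is logically sound. Several of your steps mirror arguments the paper uses elsewhere:
\begin{itemize}
\item Your boundary step is essentially the paper's argument in the proof of Theorem~\ref{thm_target}. At most $n$ of the $2n+1$ median arguments are peaks, so attaining $\{1\}$ and $\{0\}$ forces $\alpha_\emptyset=\beta_\emptyset=1$ and $\alpha_N=\beta_N=0$.
\item Your use of anonymity at $\{0,1\}$-profiles, to make $\alpha_M,\beta_M$ depend only on $|M|$, is the converse of an observation in the proof of Theorem~\ref{thm_anonymity}. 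There the paper notes that when the interior parameters do not depend on the coalition, $\widetilde{\alpha}_R$ is independent of $\pi$, and anonymity follows.
\end{itemize}

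You diverge from the paper at the identification step, and there your detour buys nothing. Once the interior parameters depend only on size, $\widetilde{\alpha}_R=(1,a_{n-1},\ldots,a_1,0)$ for every profile and every tie-breaking permutation $\pi$. So the ordering and tie issues you flag as the main obstacle do not arise. Deleting the global minimum $0$ and the global maximum $1$ from a list of $2n+1$ numbers does not change its median. This gives $f_G^\alpha(R)=\operatorname{med}\{\tau(R_1),\ldots,\tau(R_n),a_1,\ldots,a_{n-1}\}$ in one line. The paper uses exactly this step implicitly when it concludes $F=F_T^{c,d}$ from $\widetilde{\alpha}_R=(1,c,\ldots,c,0)$.

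Your extreme-profile lemma, by contrast, is trivial as a statement about two rules already known to be of the form $f_G^{\alpha}$, since their values at $\{0,1\}$-profiles are just their parameters. The real content of your route sits in the unproved claim that the phantom median is such a rule. You would have to import that from Proposition~\ref{prop_moulin}, using that phantom medians are strategy-proof and peaks-only, or verify it through the max--min representation. The direct computation proves it for free.

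A minor slip: at $M=\emptyset$ and $M=N$ both rules give $1$ and $0$ respectively, not $a_{n-|M|}$.
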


We next introduce the target set correspondence of \citet{Klaus2020target}, which generalizes the target rule studied by \citet{Thomson1993target}.

\paragraph{(Minimally flexible) Target set correspondences}
Fix a target interval $[c,d]\subseteq A$ with $c < d$.
The \textit{target set correspondence} associated with $[c,d]$, denoted by $F_T^{c,d}$, is defined by
\[
F_T^{c,d}(R)=
\begin{cases}
[c,d]\cap[\underline{\tau}(R),\overline{\tau}(R)] &
\text{if } [c,d]\cap[\underline{\tau}(R),\overline{\tau}(R)]\neq\emptyset,\\
\{\underline{\tau}(R)\} &
\text{if } \underline{\tau}(R)>d,\\
\{\overline{\tau}(R)\} &
\text{if } \overline{\tau}(R)<c.
\end{cases}
\]

The interval $[c,d]$ can be interpreted as a target region fixed in advance.
For instance, following the public-good interpretation of \citet{Klaus2020target}, it may represent a predetermined range of candidate locations for a public facility.
The correspondence retains this target region whenever it is compatible with efficiency, while adjusting it to the relevant efficient subset, or to the closest efficient point, when the voters' preferences make the entire target region inefficient.
Thus, the target set captures an ex ante institutional commitment while preserving set-valued discretion whenever such discretion is compatible with efficiency.

Each $F_T^{c,d}$ is an efficient anonymous median correspondence.
Indeed, $F_T^{c,d}$ is generated by the fixed-vote vectors $a=(c,\ldots,c)$ and $b=(d,\ldots,d)$.
Thus, every target set correspondence is an efficient median correspondence.

Let $\mathcal{T}$ denote the class of all target set correspondences.

We are now ready to state our main characterization result.

\begin{theorem}
\label{thm_target}
For any minimally flexible SCC $F:\mathcal{SP}^n\to\mathcal{A}$, the following statements are equivalent:
\begin{enumerate}
    \item $F$ satisfies strategy-proofness, voter sovereignty, and pairwise strategy-proofness;
    \item $F\in\mathcal{T}$.
\end{enumerate}
\end{theorem}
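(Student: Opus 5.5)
For the direction (1)$\Rightarrow$(2), I would first reduce to the class $\mathcal{G}$ and then pin down the parameters.

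\emph{Reduction.} By Proposition~\ref{prop_sp_vs_po}, strategy-proofness and voter sovereignty give peaks-onliness. Proposition~\ref{prop_klaus} then shows that $F=F_G^{\alpha,\beta}$ for some admissible $(\alpha,\beta)$. Since $F$ is minimally flexible, $F\in\mathcal{G}$, and Theorem~\ref{thm_anonymity} applies.

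\emph{Interior parameters.} I use the strengthened conclusion noted after Theorem~\ref{thm_anonymity}, which I expect the proof of that theorem to deliver. It gives $c<d$ with $\alpha_M=c$ and $\beta_M=d$ for every nonempty $M\subsetneq N$.

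\emph{Boundary parameters.} Voter sovereignty now fixes the two remaining parameters.
\begin{itemize}
\item Take the profile with all peaks at $0$. The lower endpoint is then $\operatorname{med}$ of $n$ zeros together with $\alpha_\emptyset,c,\ldots,c,\alpha_N$. There are $2n+1$ entries, so this equals $\min(\alpha_N,\ldots)$.
\item More directly, a singleton $\{0\}$ requires both endpoints to equal $0$. The upper endpoint is at least $\operatorname{med}$ of $n$ peaks together with $\beta_\emptyset,d,\ldots,d,\beta_N$.
\item Since $\beta$ is monotone with $\beta_N$ as its smallest entry, this median is at least $\min(\text{peaks}\ldots)$. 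I expect that attaining $\{0\}$ forces $\beta_N=0$, and symmetrically attaining $\{1\}$ forces $\alpha_\emptyset=1$.
\item The cleanest route is to show that outcome $\{0\}$ forces at least $n+1$ of the $2n+1$ entries in the $\beta$-list to be $\le 0$. At most $n$ of these can be peaks, so some $\beta$-parameter equals $0$, and by monotonicity $\beta_N=0$. The symmetric argument gives $\alpha_\emptyset=1$.
\item Consequently $\alpha_N\le\beta_N=0$ and $\beta_\emptyset\ge\alpha_\emptyset=1$, so $\alpha_N=0$ and $\beta_\emptyset=1$.
\end{itemize}

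\emph{Identification.} The rule is now the median of the peaks with $(1,c,\ldots,c,0)$ for the lower endpoint, and with $(1,d,\ldots,d,0)$ for the upper endpoint. The extreme phantoms $0$ and $1$ cancel, so the rule coincides with the efficient median correspondence $F_M^{a,b}$ with $a=(c,\ldots,c)$ and $b=(d,\ldots,d)$. As noted in the text, this is exactly $F_T^{c,d}$. I would verify this by a short case split: $\underline\tau(R)>d$, $\overline\tau(R)<c$, or overlap. In each case the median of the $n$ peaks with $n-1$ copies of $c$ is $\max(c,\underline\tau)$ clipped at $\overline\tau$.

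\emph{Converse (2)$\Rightarrow$(1).} Every $F_T^{c,d}$ is an efficient median correspondence, so Proposition~\ref{prop_eff_median} gives strategy-proofness and voter sovereignty; minimal flexibility holds since $c<d$. For pairwise strategy-proofness, suppose a coalition $C=\{i,j\}$ deviates profitably from $R$ to $R'$. Deviations that do not change $[\underline\tau,\overline\tau]$ leave the outcome unchanged. So WLOG the deviation moves $\underline\tau$ or $\overline\tau$.

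I would split according to whether the target interval meets the peak range.
\begin{itemize}
\item \emph{Overlap case.} Here $F(R)=[\max(c,\underline\tau),\min(d,\overline\tau)]$. To make a member strictly better, the coalition must shrink the interval on the side away from some member's peak, or move it toward that peak.
\item Shifting the lower endpoint $\max(c,\underline\tau)$ upward requires the voter holding the minimum peak, call them $i$, to report higher. If $i\in C$, then $i$'s worst element weakly worsens or its best element moves away. Any other member's gain requires moving the upper endpoint downward, which in turn requires the holder of $\overline\tau$ to deviate.
\item So both coalition members are the two extreme voters, with $\tau_i\le\underline{F}$ and $\tau_j\ge\overline F$. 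Each endpoint change then hurts the other member's best or worst element, contradicting weak improvement for all.
\item \emph{Singleton cases.} These are handled by the standard single-valued argument for target rules, using that the outcome is the extreme peak.
\end{itemize}

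\emph{Main obstacle.} I expect the converse to be the main obstacle: carefully checking the best--worst comparisons in every configuration of the two deviators' peaks relative to $c$, $d$, $\underline\tau$ and $\overline\tau$, including the case where one deviator's misreport changes which voter is extreme. I would organize it by the direction in which each endpoint moves, and show that any endpoint move strictly harms the extreme voter on that side or leaves some member's best--worst pair incomparable.
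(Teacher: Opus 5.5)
Your direction (1)$\Rightarrow$(2) matches the paper. Propositions~\ref{prop_sp_vs_po} and~\ref{prop_klaus} place $F$ in $\mathcal{G}$, and Corollary~\ref{cor_constant_parameters} (the strengthened conclusion you invoke) fixes the interior parameters at $c<d$. The counting in your last bullet then gives $\beta_N=0$ and, symmetrically, $\alpha_\emptyset=1$: the singleton $\{0\}$ needs $n+1$ of the $2n+1$ entries to equal $0$, and at most $n$ of them are peaks. The bullets before it are false starts and can be dropped. Your explicit identification, discarding the phantoms $0,1$ and clipping $c,d$ to $[\underline\tau(R),\overline\tau(R)]$, is correct and spells out a step the paper only asserts.

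\textbf{The gap is in the converse: pairwise strategy-proofness of $F_T^{c,d}$ is only sketched, and the claims organizing the sketch do not hold.}

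\emph{Missed case.} Your overlap-case bullets treat only shrinking moves ($l$ up, $u$ down) and conclude that the coalition is the two extreme voters. Moves that improve the best alternative of a member whose peak lies outside $F(R)$ are never analyzed, and they need not involve the holder of $\underline\tau(R)$. Take peaks $0.1,0.2,0.9$ and $[c,d]=[0.5,0.7]$. The voter at $0.2$ improves her best alternative only if $\overline\tau(R')<c$, i.e.\ jointly with the voter at $0.9$. That deviation fails because the latter's best point drops below $0.7$, not for the reason you give.

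\emph{False fallback.} Your plan that every endpoint move strictly harms the extreme voter on that side is false. If $l=\underline\tau(R)\ge c$, a non-extreme member can lower $l$ by reporting below $\underline\tau(R)$, with no extreme voter in the coalition. That move fails only because no peak lies below $l$.

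\emph{Singleton cases.} These are not ``the standard single-valued argument'', since a deviation from $\{\underline\tau(R)\}$ can produce a non-degenerate interval.

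\textbf{The missing idea is the paper's two-step decomposition}, which removes the case split. Write $F(R)=[l,u]$ and $F(R')=[l',u']$.

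(a) No coalition can weakly improve every member's best alternative while strictly improving member $j$'s. A peak inside $[l,u]$ leaves no room for a best gain, and $\tau(R_j)>u$ is symmetric to the following. If $\tau(R_j)<l$, then $l\le c$, because $l>c$ forces $l=\underline\tau(R)$. A gain needs $l'<l$. This forces $\overline\tau(R')<c$ when $l=c$, and $\overline\tau(R')<l$ when $l<c$. So some member $k$ with $\tau(R_k)\ge l$ has deviated, and all of $F(R')$ lies strictly below her best point of $F(R)$, which lies in $[l,\tau(R_k)]$.

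(b) If no member's best alternative worsens, then $l'\le l$ and $u'\ge u$. A case check on $l<c$, $l=c$, $l>c$ shows that $l'>l$ (symmetrically $u'<u$) strictly worsens some member's best alternative. Hence $F(R)\subseteq F(R')$, and every member's worst alternative weakly worsens.

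Together, (a) and (b) mean any best--worst improvement would have to be a strict best improvement, which (a) rules out. This covers all configurations, including singleton outcomes, and coalitions of any size.
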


The converse direction establishes a stronger robustness property.
Every target set correspondence is immune to profitable deviations by arbitrary coalitions under the best--worst extension.
Hence, although only pairwise strategy-proofness is required in Theorem~\ref{thm_target}, the resulting class is in fact robust to coalitions of any size.

\section{Discussion}

\subsection{Benchmark Set Extensions}

We consider two benchmark extensions of preferences over intervals.

\paragraph{Optimistic and pessimistic extensions}
For any $X,Y\in\mathcal{A}$, $X R_i^{O}Y$ if and only if $b_{R_i}(X)R_i b_{R_i}(Y)$, and $X P_i^{O}Y$ if and only if $b_{R_i}(X)P_i b_{R_i}(Y)$.

Similarly, $X R_i^{P}Y$ if and only if $w_{R_i}(X)R_i w_{R_i}(Y)$, and $X P_i^{P}Y$ if and only if $w_{R_i}(X)P_i w_{R_i}(Y)$.

We refer to these as the optimistic and pessimistic extensions, respectively.

\paragraph{Pairwise O-strategy-proofness}
An SCC $F:\mathcal{SP}^n\to\mathcal{A}$ is \textit{pairwise O-strategy-proof} if there do not exist $R=(R_C,R_{-C})\in\mathcal{SP}^n$, a nonempty coalition $C\subseteq N$ with $|C|\leq2$, and $R_C'\in\mathcal{SP}^{|C|}$ such that $F(R_C',R_{-C})R_i^{O}F(R)$ for every $i\in C$ and $F(R_C',R_{-C})P_j^{O}F(R)$ for some $j\in C$.

\paragraph{Pairwise P-strategy-proofness}
An SCC $F:\mathcal{SP}^n\to\mathcal{A}$ is \textit{pairwise P-strategy-proof} if there do not exist $R=(R_C,R_{-C})\in\mathcal{SP}^n$, a nonempty coalition $C\subseteq N$ with $|C|\leq2$, and $R_C'\in\mathcal{SP}^{|C|}$ such that $F(R_C',R_{-C})R_i^{P}F(R)$ for every $i\in C$ and $F(R_C',R_{-C})P_j^{P}F(R)$ for some $j\in C$.

We first consider pairwise strategy-proofness under the pessimistic extension.
Under voter sovereignty, pessimistic strategic robustness is incompatible with minimal flexibility.
\begin{theorem}
\label{thm_gpsp}
There is no $F\in\mathcal{G}$ satisfying voter sovereignty
that is pairwise P-strategy-proof.
\end{theorem}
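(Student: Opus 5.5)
The plan is to reduce to the parametric form and then exhibit a profitable pessimistic deviation by a coalition of \emph{two} voters. A single voter cannot do the job. Every $F\in\mathcal{G}$ is strategy-proof by Proposition~\ref{prop_klaus}, so for any unilateral deviation the truthful interval is weakly better under $R_i^{BW}$. In particular its worst element is weakly better, which rules out a strict $P$-improvement. The violation must therefore come from a genuine pair in which neither member can gain alone.

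\textbf{Step 1 (boundary parameters).} Write $F=F_G^{\alpha,\beta}$. I first use voter sovereignty to pin down the extreme parameters: $\alpha_N=\beta_N=0$ and $\alpha_\emptyset=\beta_\emptyset=1$.
\begin{itemize}
\item Suppose $\beta_N>0$. The upper endpoint $\operatorname{med}(\tau(R),\widetilde\beta_R)$ is the median of $2n+1$ numbers, $n+1$ of which are at least $\beta_N$. Hence it is always at least $\beta_N$, and $\{0\}$ is never a value of $F$.
\item The same counting argument gives $\alpha_\emptyset=1$. Monotonicity $\alpha_M\le\beta_M$ then yields the other two equalities.
\end{itemize}
Consequently $F$ is unanimous: if all peaks equal $m$, then $n$ of the $2n+1$ coordinates equal $m$, the remaining parameters bracket $m$, and so $F(R)=\{m\}$. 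The same bracketing shows that $F(R)\subseteq[\underline\tau(R),\overline\tau(R)]$.

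\textbf{Step 2 (a flexible profile with two interior peaks).} By minimal flexibility there is a profile $R$ with $F(R)=[x,y]$ and $x<y$.
\begin{itemize}
\item Since $F(R)\subseteq[\underline\tau(R),\overline\tau(R)]$ and the endpoints are medians, I will move peaks monotonically, using the median structure and $n\ge3$. The aim is a profile $\hat R$ with $F(\hat R)=[x,y]$, $x<y$, and two voters $i,j$ whose peaks $p\le q$ lie in the interior $(x,y)$.
\item Concretely, I would first try to find a coalition $M$ with $\alpha_M<\beta_M$ that is responsible for flexibility. Voters in $M$ are placed at $0$, the rest at $1$, so that $F=[\alpha_M,\beta_M]$.
\item Then one voter from each side is moved to peaks inside $(\alpha_M,\beta_M)$. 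Here I must check that $F$ stays nondegenerate and that its endpoints still lie outside these peaks. If this fails for a particular $M$, I would instead exploit the changes in $\widetilde\alpha$ and $\widetilde\beta$ to locate a different pair.
\end{itemize}

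\textbf{Step 3 (the joint deviation).} Let $i$ and $j$ both report the peak $m:=(p+q)/2$, or a common point in $[p,q]$.
\begin{itemize}
\item With the other voters fixed, the new outcome $[x',y']$ should satisfy $x\le x'\le m\le y'\le y$, with at least one inequality strict on each side. Strictness requires that the reports of $i$ and $j$ are pivotal for the respective medians. This is where having two voters, rather than one, matters: moving two order statistics can shift both endpoints.
\item Since preferences enter only through peaks, I am free to choose the true preferences $R_i$ and $R_j$ with the given peaks. I make each of them steep on the side of its peak nearer to the interval boundary, so that the worst element of $[x,y]$ for each voter is the endpoint that actually moves inward.
\item Then $w_{R_i}([x',y'])\,P_i\,w_{R_i}([x,y])$ and $w_{R_j}([x',y'])\,R_j\,w_{R_j}([x,y])$. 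This contradicts pairwise P-strategy-proofness.
\end{itemize}

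\textbf{Main obstacle.} I expect the main obstacle to be Step 2/3: guaranteeing that the joint report strictly moves an endpoint that is the relevant worst element, for an arbitrary (non-anonymous) parameter vector $(\alpha,\beta)$. My first attempt would exploit that at the constructed profile both endpoints are determined by medians in which the two interior voters sit adjacent to the median position. If that fails, I would use a two-stage argument: first an individual move, which is harmless by strategy-proofness, to bring the profile into such a pivotal configuration, and then the pair deviation.
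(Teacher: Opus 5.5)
Your Step 1 is correct and matches the paper. Steps 2--3, however, rest on a configuration that cannot deliver the violation, so there is a genuine gap.

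You want two deviators whose true peaks $p\le q$ lie strictly inside $F(\hat R)=[x,y]$, and whose joint report of a common point in $[p,q]$ strictly shrinks the interval. This fails for two reasons.
\begin{enumerate}
\item For $n=3$ no such profile exists for any voter-sovereign $F\in\mathcal{G}$. You showed $F(\hat R)\subseteq[\underline\tau(\hat R),\overline\tau(\hat R)]$, so the third voter's peak would have to be both $\le x$ and $\ge y$, contradicting $x<y$.
\item For $n\ge4$ the profile may exist, but the deviation does nothing unless the reordering of voters changes $\widetilde\alpha$ or $\widetilde\beta$. The endpoint $x$ is the $(n+1)$-th smallest of the $2n+1$ numbers formed by the peaks and $\widetilde\alpha$. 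Moving two peaks from points strictly above $x$ to another point strictly above $x$ changes neither the number of coordinates below $x$ nor the number weakly below it; the same holds for $y$. For anonymous members of the class there is no reordering effect at all. For instance, every $F_T^{c,d}$ has $\widetilde\alpha_R=(1,c,\dots,c,0)$ and $\widetilde\beta_R=(1,d,\dots,d,0)$ at every profile, so the outcome is exactly unchanged.
\end{enumerate}
Your fallbacks do not remove this obstruction. There is nothing in $\widetilde\alpha,\widetilde\beta$ to exploit for such rules. The two-stage device is not an argument either: a preliminary unilateral move leaves the mover weakly worse off by strategy-proofness, and if it merely relocates the truthful profile you are back to needing Step 2.

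The missing idea is to use deviators whose true peaks lie \emph{outside} the interval, on opposite sides, and to let the move into the interior \emph{be} the deviation. Your Step 2 already contains this move but treats its result as a new starting profile.
\begin{itemize}
\item Take $M$ maximal by inclusion among coalitions with $\alpha_M<\beta_M$, and let $m:=|M|$. By minimal flexibility and Step 1, $M$ is nonempty and proper, and $\alpha_H=\beta_H$ for every $H\supsetneq M$.
\item At the profile with $M$ at $0$ and $N\setminus M$ at $1$, $F(R)=[\alpha_M,\beta_M]$.
\item Let $i\in M$ and $j\notin M$ both report some $z\in(\alpha_M,\beta_M)$. Order voters as $M\setminus\{i\}$, then $i$, then $j$, then the rest, and let $S_r$ be the first $r$ voters.
\item For $r\le m$ one has $\beta_{S_r}\ge\beta_M>z$. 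For $r\ge m+1$, maximality gives $\beta_{S_r}=\alpha_{S_r}\le\alpha_M<z$. Hence exactly $n-1$ of the $2n+1$ numbers lie below $z$, and the upper endpoint becomes $z$.
\item At most $(m-1)+(n-m)=n-1$ numbers lie below $\alpha_M$, so the lower endpoint stays $\ge\alpha_M$.
\item Voter $i$ (peak $0$) sees her worst element drop from $\beta_M$ to $z$, a strict gain. Voter $j$ (peak $1$) sees hers move from $\alpha_M$ to something $\ge\alpha_M$, a weak gain.
\end{itemize}
This is the paper's proof. It needs no interior-peak configuration, pivotality analysis, or choice of preference steepness.
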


By contrast, the optimistic extension does not rule out set-valued flexibility.
The argument used in the proof of Theorem~\ref{thm_target} also shows that, within the class of voter-sovereign generalized median correspondences, pairwise O-strategy-proofness imposes the same target-set structure.
Thus, the target-set form remains the relevant restriction under the optimistic benchmark, although our main characterization result is stated
for the best--worst extension.

Thus, the contrast between the optimistic and pessimistic extensions is sharp.
Under the pessimistic extension, voter sovereignty and pairwise strategic robustness are incompatible with genuine set-valued flexibility.
Under the optimistic extension, by contrast, flexibility can be preserved, but only through the target-set structure.
Together with our main results under the best--worst extension, these observations show that the implications of coalitional robustness depend critically on how voters evaluate set-valued outcomes.

\subsection{Related Literature}
The target-set structure studied in this paper builds on a line of work on target rules in public-good allocation with single-peaked preferences.
For single-valued choice, target point functions have been characterized by solidarity requirements such as population-monotonicity and replacement-dominance.
In particular, \citet{ChingThomson1993} obtain a characterization based on population-monotonicity, while \citet{Thomson1993target} and \citet{Vohra1999} study target structures through replacement-dominance.
See also \citet{Klaus2001} and \citet{Gordon2007a} for extensions of these characterizations to broader public-good environments.

\citet{Klaus2020target} extend this approach to set-valued choice.
They show that, under the best--worst extension, efficiency and population-monotonicity characterize target set correspondences, while efficiency and replacement-dominance characterize the single-valued subclass of target point functions.
Thus, the existing characterizations of target structures are primarily based on solidarity requirements governing how outcomes should respond to changes in preferences or in the population.

Our characterization takes a different route.
We keep the population fixed and derive the target-set structure from incentive requirements.
In particular, strategy-proofness, voter sovereignty, and resistance to profitable deviations by coalitions of at most two voters are sufficient to characterize target set correspondences.
Hence, our result provides a strategic foundation for the target-set structure that is distinct from the solidarity-based foundations in the existing literature.

A related literature studies strategy-proofness for social choice correspondences under different extensions of preferences from alternatives to sets; see, for example, \citet{barbera2001strategy}, \citet{ching2002multi}, and \citet{IngalagaviSadhukhan2023}.
Our analysis focuses on the best--worst extension and, as benchmarks, the optimistic and pessimistic extensions.
The contrasting results in Section~4.1 show that coalitional robustness can depend substantially on how voters evaluate set-valued outcomes.
We leave the analysis under other set extensions for future research.

\subsection{Concluding Remarks}

This paper studies how coalitional incentive requirements restrict set-valued median voting on the single-peaked domain.
Our first result shows that, once genuine set-valued flexibility is required, pairwise strategy-proofness eliminates the asymmetries permitted by generalized median correspondences and endogenously yields anonymity.
With voter sovereignty, this restriction becomes even sharper and leads to the target-set structure.
From an institutional perspective, the target interval can be interpreted as a form of ex ante commitment that preserves some discretion in the final choice while limiting the scope for coordinated manipulation.

These results highlight a contrast with single-valued median voting, where strong coalitional incentive compatibility can coexist with non-anonymous generalized median voter rules.
Thus, with genuine set-valued flexibility, symmetry can emerge as a consequence of coalitional incentive compatibility rather than as an independent requirement.
The benchmark results under the optimistic and pessimistic extensions further show that the extent to which flexibility can be preserved depends on how voters evaluate set-valued outcomes.

\begin{center}
\Large{{\bf Appendix}}
\end{center}

\numberwithin{definition}{section}
\numberwithin{theorem}{section}
\numberwithin{lemma}{section}
\numberwithin{proposition}{section}
\numberwithin{corollary}{section}
\numberwithin{example}{section}
\renewcommand{\theequation}{\thesection.\arabic{equation}}
\appendix

\section{Omitted Proofs}

\subsection{Proof of Theorem \ref{thm_anonymity}}

\begin{lemma}
\label{lem_interior_flexibility}
Let $F=F_G^{\alpha,\beta}\in\mathcal{G}$ satisfy pairwise strategy-proofness.
Then there exists a nonempty coalition $K\subsetneq N$ such that
$\alpha_K<\beta_K$.
\end{lemma}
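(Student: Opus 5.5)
We argue by contradiction. Suppose $F=F_G^{\alpha,\beta}\in\mathcal{G}$ is pairwise strategy-proof, but $\alpha_K=\beta_K$ for every nonempty coalition $K\subsetneq N$. Write $\gamma_K$ for this common value. The plan is to show that all flexibility of $F$ then comes from the two boundary parameters $(\alpha_\emptyset,\beta_\emptyset)$ and $(\alpha_N,\beta_N)$. We then build a profile and a deviation by at most two voters that shrinks the chosen interval while keeping the deviators' peaks inside it. Such a deviation is profitable under the best--worst extension for the following reason. If $X\supsetneq Y$ are intervals and $\tau(R_i)\in Y$, then $b_{R_i}(X)=b_{R_i}(Y)=\tau(R_i)$. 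Moreover, $w_{R_i}(Y)\,R_i\,w_{R_i}(X)$, and the comparison is strict as soon as the endpoint of $X$ that is worse for $i$ is removed. Hence $Y R_i^{BW} X$ for every such voter, strictly for at least one of them, which contradicts pairwise strategy-proofness.

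\textbf{Step 1 (locating the flexibility).} By minimal flexibility there is a profile $R^0$ with $\operatorname{med}(\tau(R^0),\widetilde\alpha_{R^0})<\operatorname{med}(\tau(R^0),\widetilde\beta_{R^0})$. The two vectors $\widetilde\alpha_{R^0}$ and $\widetilde\beta_{R^0}$ differ only in the first and last coordinates. Each coordinate of $\widetilde\alpha$ is at most the corresponding coordinate of $\widetilde\beta$, so the median is monotone in these entries. It follows that either $\alpha_\emptyset<\beta_\emptyset$ and the lower median equals $\alpha_\emptyset$, or symmetrically $\alpha_N<\beta_N$ and the upper median equals $\beta_N$. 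By the reflection $x\mapsto 1-x$, which swaps the roles of $\emptyset$ and $N$, it suffices to treat the first case. In that case the upper median is some value in $(\alpha_\emptyset,\beta_\emptyset]$.

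\textbf{Step 2 (a concrete flexible profile).} Since the parameters are decreasing along chains, $\alpha_\emptyset\ge\gamma_K$ for all interior $K$. Take all peaks equal to $1$. Then the median of the $n$ peaks together with the $n+1$ parameters is the largest parameter, so $F(R)=[\alpha_\emptyset,\beta_\emptyset]$, a non-singleton interval. Next perturb two voters $i,j$ so that their peaks $t_i<t_j$ lie strictly inside $(\alpha_\emptyset,\beta_\emptyset)$, keeping the other $n-2\ge1$ peaks at $1$. Recomputing the medians, the lower endpoint should stay at $\alpha_\emptyset$, because $\alpha_\emptyset$ is still the $(n+1)$-th largest entry. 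The upper endpoint should stay at $\beta_\emptyset$, because $n-2$ peaks together with $\beta_\emptyset$ lie above it. So the interval still contains $t_i$ and $t_j$ in its interior.

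\textbf{Step 3 (the profitable pair deviation).} Let $i$ and $j$ both report the peak $0$. Then the ordering $\pi$ places $i,j$ first, and only $\alpha_\emptyset,\beta_\emptyset$ differ between the two median computations. With two low peaks, the $(n+1)$-th largest entry should drop below $\alpha_\emptyset$ on both sides, landing on an interior value $\gamma$ or on a peak, so the interval collapses. If this overshoots and removes $t_i$ or $t_j$ from the outcome, the alternative is to let a single voter report a peak in the middle of the interval, or to let both report $t_i$. The aim in every version is an outcome $Y$ with $\tau(R_i),\tau(R_j)\in Y\subsetneq[\alpha_\emptyset,\beta_\emptyset]$ in which the endpoint $\beta_\emptyset$ (or $\alpha_\emptyset$) is cut off. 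By the observation in the first paragraph, this yields the required contradiction.

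\textbf{Main obstacle.} The hardest part is the median bookkeeping in Steps 2--3. The value of the interior constant $\gamma_{\{i\}}$ or $\gamma_{\{i,j\}}$ relative to $\alpha_\emptyset$ and to the chosen peaks determines whether the deviation shrinks the interval to a subinterval that still contains the deviators' peaks, or instead moves it past them. I would first place the deviators' peaks close to $\beta_\emptyset$, so that only the upper endpoint is affected. I would then split into cases according to whether $\gamma_{\{i,j\}}$ is below or above $\alpha_\emptyset$. In each case I would choose the misreports so that the new interval is $[\alpha_\emptyset,\max(t_j,\cdot)]$ with upper endpoint strictly below $\beta_\emptyset$.
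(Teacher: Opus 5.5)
\textbf{There is a genuine gap in Steps 2--3.} It is not just bookkeeping: the mechanism you propose cannot produce the contradiction.

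\textbf{The Step 2 computation is wrong.} Place $t_i<t_j$ in $(\alpha_\emptyset,\beta_\emptyset)$ and the other $n-2$ peaks at $1$. Then only $n-1$ of the $2n+1$ entries (the $n-2$ ones and $\beta_\emptyset$) lie at or above $\beta_\emptyset$. Every other $\beta$-parameter is at most $\alpha_\emptyset<t_i$. So the $(n+1)$-th largest entry is $t_i$, giving $F(R)=[\alpha_\emptyset,t_i]$, which does not contain $t_j$.

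\textbf{The contraction idea of your first paragraph cannot be realized under the contrary hypothesis.}
\begin{enumerate}
\item Every coordinate of $\widetilde\alpha_R$ is at most $\alpha_\emptyset$. Hence the lower endpoint of every outcome is at most $\alpha_\emptyset$.
\item Suppose an outcome has upper endpoint $u>\alpha_\emptyset$. At least $n+1$ entries must be $\ge u$, and only the $n$ peaks and $\beta_\emptyset$ can qualify. Hence $\underline\tau(R)\ge u$ and $F(R)=[\alpha_\emptyset,\min\{\underline\tau(R),\beta_\emptyset\}]$.
\item The only other non-singleton outcomes are the mirror image at the $N$ end, and these arise only if also $\alpha_N<\beta_N$.
\end{enumerate}
So a voter whose peak lies in $X=[\alpha_\emptyset,u]$ has peak exactly $u$, and her worse endpoint is $\alpha_\emptyset$. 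Any outcome $Y$ containing $u$ has lower endpoint at most $\alpha_\emptyset$, hence $Y\supseteq X$. No $Y\subsetneq X$ of the kind you need exists, whatever misreports you choose.

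Your concrete fallbacks confirm this. Both reporting $0$ gives the singleton $\{\gamma_{\{i,j\}}\}$, which lies weakly below $\alpha_\emptyset$ and so below both true peaks. Reporting a middle peak either leaves the outcome unchanged or lowers the deviators' best element.

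\textbf{The missing idea: the profitable deviation must enlarge the outcome, not shrink it.} The strict gain goes to a coalition partner who reports truthfully.
\begin{enumerate}
\item Let voter $i$ have peak $t\in(\alpha_\emptyset,\beta_\emptyset)$, with all other peaks at $1$, and choose $R_i$ with $\beta_\emptyset P_i\alpha_\emptyset$. Then $F(R)=[\alpha_\emptyset,t]$; this is exactly your corrected Step 2 with one perturbed voter.
\item Let $i$ report peak $1$ while $j$ keeps her true report. The outcome becomes $[\alpha_\emptyset,\beta_\emptyset]$.
\item Voter $j$, with peak $1$, strictly improves her best element from $t$ to $\beta_\emptyset$, and her worst element stays $\alpha_\emptyset$.
\item Voter $i$ keeps best element $t$. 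By the choice of $R_i$, her worst element remains $\alpha_\emptyset$.
\end{enumerate}
Thus $\{i,j\}$ is a profitable pair, contradicting pairwise strategy-proofness. Your framework seeks strict gains only through a deviator's worst element, which is exactly the kind of gain that is unavailable here.
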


\begin{proof}
Suppose, to the contrary, that $\alpha_M=\beta_M$
for every nonempty coalition $M\subsetneq N$.

Since $F$ satisfies minimal flexibility, there exists some $H\subseteq N$ such that $\alpha_H<\beta_H$.
Indeed, if $\alpha_H=\beta_H$ for every $H\subseteq N$, then the two parameter vectors entering the lower and upper median operators coincide, and hence $F(R)$ is a singleton for every $R\in\mathcal{SP}^n$.
Therefore, under our supposition, either $\alpha_{\emptyset}<\beta_{\emptyset}$ or $\alpha_N<\beta_N$.
By symmetry, it suffices to consider the case
$\alpha_{\emptyset}<\beta_{\emptyset}$.

Suppose that $\alpha_{\emptyset}<\beta_{\emptyset}$.
Choose $t\in A$ such that $\alpha_{\emptyset}<t<\beta_{\emptyset}$, and fix two distinct voters $i,j\in N$.
Consider a profile $R\in\mathcal{SP}^n$ such that
$\tau(R_i)=t$ and $\tau(R_h)=1$ for every $h\neq i$.
Choose $R_i$ so that $\beta_{\emptyset}P_i\alpha_{\emptyset}$.
For every nonempty coalition $M\subsetneq N$, our supposition implies $\beta_M=\alpha_M\leq\alpha_{\emptyset}$.
Moreover, by monotonicity of $\beta$, we also have $\beta_N\leq\beta_M\leq\alpha_{\emptyset}$ for any nonempty coalition $M\subsetneq N$.
Hence, the median representation gives $F(R)=[\alpha_{\emptyset},t]$.

Now let $C=\{i,j\}$ and consider a report $R_C'$ such that $\tau(R_i')=1$ and $R_j'=R_j$.
Writing $R'=(R_C',R_{-C})$, all voters have peak $1$.
Therefore, the median representation gives
$F(R')=[\alpha_{\emptyset},\beta_{\emptyset}]$.
Since $\tau(R_j)=1$ and $t<\beta_{\emptyset}$, $\beta_{\emptyset} P_jt$.
Moreover, her worst alternative is $\alpha_{\emptyset}$ in both intervals.
Hence, $F(R')P_j^{BW}F(R)$.

For voter $i$, both intervals contain her peak $t$, so her best alternative is $t$ in both intervals.
Since $\beta_{\emptyset}P_i\alpha_{\emptyset}$, her worst alternative is $\alpha_{\emptyset}$ in both intervals.
Therefore, $F(R')R_i^{BW}F(R)$.
Thus, the coalition $C=\{i,j\}$ has a profitable deviation, contradicting pairwise strategy-proofness.

The case $\alpha_N<\beta_N$ is symmetric, which completes the proof.

\end{proof}

\begin{lemma}
\label{lem_local_rigidity}
Let $F=F_G^{\alpha,\beta}$ satisfy pairwise strategy-proofness.
Let $K\subsetneq N$ be a nonempty coalition such that $\alpha_K<\beta_K$.
For any nonempty coalition $M\subsetneq N$ such that
$|K\triangle M|=1$, we have
$\alpha_M=\alpha_K$ and $\beta_M=\beta_K$.
\end{lemma}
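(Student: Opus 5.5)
We prove the lemma by contradiction, using explicit profiles on which $F$ is an interval whose endpoints are determined by the parameters of $K$ and $M$. In each case we exhibit a two-voter coalition with a profitable deviation, in the spirit of Lemma~\ref{lem_interior_flexibility}. Since $|K\triangle M|=1$, either $M=K\cup\{h\}$ with $h\notin K$, or $M=K\setminus\{h\}$ with $h\in K$.

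\emph{Reduction to the case $M=K\cup\{h\}$.} Consider the relabelling $x\mapsto 1-x$ of $A$. It maps $F_G^{\alpha,\beta}$ to the generalized median correspondence with parameters $\alpha'_S=1-\beta_{N\setminus S}$ and $\beta'_S=1-\alpha_{N\setminus S}$. It preserves pairwise strategy-proofness, since single-peakedness and the best--worst comparisons are symmetric under the reflection. It also turns the removal case into the addition case, because $N\setminus M=(N\setminus K)\setminus\{h\}$, and it exchanges the roles of $\alpha$ and $\beta$. So it suffices to treat $M=K\cup\{h\}$. There monotonicity gives $\alpha_M\le\alpha_K$ and $\beta_M\le\beta_K$, and we must rule out $\beta_M<\beta_K$ and $\alpha_M<\alpha_K$. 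Since $M\subsetneq N$, there is a voter $j\notin M$.

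\emph{Basic profile.} Give every voter in $K$ peak $0$, voter $h$ peak $t$, and every other voter peak $1$. Then $\pi$ lists $K$ first and $h$ next, and the median representation should give
\[
F(R)=[\operatorname{med}(t,\alpha_K,\alpha_M),\ \operatorname{med}(t,\beta_K,\beta_M)].
\]
I will verify this formula first; it is a routine counting check.
If $h$ instead reports peak $1$, the outcome is $[\alpha_K,\beta_K]$. If $h$ reports peak $0$, the outcome is $[\alpha_M,\beta_M]$.

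\emph{Case $\beta_M<\beta_K$.} Pick $t$ with $\max(\beta_M,\alpha_K)<t<\beta_K$; this is possible because $\alpha_K<\beta_K$. Then $F(R)=[\alpha_K,t]$. Let the coalition $\{h,j\}$ deviate by having $h$ report peak $1$, which yields $[\alpha_K,\beta_K]$. For $j$ (peak $1$), the best alternative rises strictly from $t$ to $\beta_K$ and the worst stays $\alpha_K$. For $h$, choose $R_h$ with $\beta_K P_h\alpha_K$. Her best alternative is $t$ in both intervals and her worst is $\alpha_K$ in both. This is a profitable pairwise deviation, exactly as in Lemma~\ref{lem_interior_flexibility}.

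\emph{Case $\alpha_M<\alpha_K$.} This is the main obstacle, because raising $h$'s report pushes the lower endpoint away from her peak. Here we let the coalition consist of $h$ and some $k\in K$, with $h$ reporting peak $0$.

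If $\alpha_M<\beta_M$, pick $t$ with $\alpha_M<t<\min(\alpha_K,\beta_M)$. Then $F(R)=[t,\beta_M]$, and the deviation yields $[\alpha_M,\beta_M]$. For $k$ (peak $0$), the best alternative improves strictly from $t$ to $\alpha_M$ and the worst stays $\beta_M$. For $h$, choose $R_h$ with $\alpha_M P_h\beta_M$. Her best alternative is $t$ in both intervals and her worst is $\beta_M$ in both, so the deviation is profitable.

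If instead $\alpha_M=\beta_M=v$, then $v\le\alpha_M<\alpha_K<\beta_K$, so $\beta_M<\beta_K$. We are then in the previous case, which already yields a contradiction.

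Together with the reflection argument, this proves $\alpha_M=\alpha_K$ and $\beta_M=\beta_K$.
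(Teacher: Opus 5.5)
Your proof is correct and follows essentially the paper's route: a two-voter coalition in which one member shifts her peak so that the cutoff coalition changes, the other member strictly improves her best alternative, and the shifter's best and worst stay fixed. Your Case~1 is the reflection of the paper's explicit case, and your Case~2 (including reducing $\alpha_M=\beta_M$ to Case~1) spells out the ``symmetric argument'' the paper leaves implicit; one slip is that the reflection step should read $N\setminus M=(N\setminus K)\cup\{h\}$, not $(N\setminus K)\setminus\{h\}$.
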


\begin{proof}
It suffices to consider the case in which $M\subset K$.
Let $K=M\cup\{j\}$ for some $j\in K$.

By monotonicity of the parameters,
$\alpha_M\geq\alpha_K$ and $\beta_M\geq\beta_K$.

We first show that $\alpha_M=\alpha_K$.
Suppose, to the contrary, that $\alpha_M>\alpha_K$.
Since $\alpha_K<\beta_K$, choose $x,t\in A$ such that
$\alpha_K<x<t<\min\{\alpha_M,\beta_K\}$.
Consider a profile $R\in\mathcal{SP}^n$ such that
$\tau(R_h)=x$ for every $h\in M$, $\tau(R_j)=t$, and
$\tau(R_h)=1$ for every $h\in N\setminus K$.
Choose $R_j$ so that $xP_j\beta_K$.

Let $m:=|M|$, and choose a permutation $\pi:N\to N$ such that $\{\pi(1),\ldots,\pi(m)\}=M$, $\pi(m+1)=j$, and $\{\pi(m+2),\ldots,\pi(n)\}=N\setminus K$.
For each $r\in\{0,\ldots,n\}$, let $S_r:=\{\pi(1),\ldots,\pi(r)\}$.
Then $S_m=M$ and $S_{m+1}=K$.

We first consider the lower endpoint.
For every $r\leq m$, since $S_r\subseteq M$, monotonicity implies $\alpha_{S_r}\geq\alpha_M>t$.
Hence, $|\{r\in\{0,\ldots,n\}:\alpha_{S_r}\geq t\}|\geq m+1$.
Moreover, $|\{h\in N:\tau(R_h)\geq t\}|=n-m$.
Therefore, $|\{h\in N:\tau(R_h)\geq t\}|+
|\{r\in\{0,\ldots,n\}:\alpha_{S_r}\geq t\}|
\geq n+1$.

On the other hand, for every $r \geq m+1$, since $K\subseteq S_r$, monotonicity implies
$\alpha_{S_r}\leq\alpha_K<x<t$.
Hence, $|\{r\in\{0,\ldots,n\}:\alpha_{S_r}\leq t\}|\geq n-m$.
Also, $|\{h\in N:\tau(R_h)\leq t\}|=m+1$.
Therefore,
$|\{h\in N:\tau(R_h)\leq t\}|+
|\{r\in\{0,\ldots,n\}:\alpha_{S_r}\leq t\}|
\geq n+1$.
Thus, $\operatorname{med}(\tau(R),\widetilde{\alpha}_R)=t$.

For the upper endpoint, for every $r\geq m+1$, we have
$\beta_{S_r}\leq\beta_K$.
Hence,
$|\{h\in N:\tau(R_h)\leq\beta_K\}|+
|\{r\in\{0,\ldots,n\}:\beta_{S_r}\leq\beta_K\}|
\geq (m+1)+(n-m)=n+1$.

Similarly, for every $r\leq m+1$, we have
$\beta_{S_r}\geq\beta_K$.
Hence,
$|\{h\in N:\tau(R_h)\geq\beta_K\}|+
|\{r\in\{0,\ldots,n\}:\beta_{S_r}\geq\beta_K\}|
\geq (n-m-1)+(m+2)=n+1$.
Thus,
$\operatorname{med}(\tau(R),\widetilde{\beta}_R)=\beta_K$,
and therefore $F(R)=[t,\beta_K]$.

Fix some $i\in M$ and let $C=\{i,j\}$.
Consider $R_C'$ such that $R_i'=R_i$ and $\tau(R_j')=x$, and write $R'=(R_C',R_{-C})$.
At $R'$, we have $\tau(R_h')=x$ for every $h\in K$ and
$\tau(R_h')=1$ for every $h\in N\setminus K$.

For the lower endpoint, $|\{h\in N:\tau(R_h')\leq x\}|=m+1$.
Moreover, for every $r\geq m+1$, since $K\subseteq S_r$, $\alpha_{S_r}\leq\alpha_K<x$.
Hence, $|\{r\in\{0,\ldots,n\}:\alpha_{S_r}\leq x\}|\geq n-m$.
Therefore, $|\{h\in N:\tau(R_h')\leq x\}|+
|\{r\in\{0,\ldots,n\}:\alpha_{S_r}\leq x\}|
\geq n+1$.
On the other hand, $|\{h\in N:\tau(R_h')\geq x\}|=n$, and $\alpha_{\emptyset}\geq\alpha_M>t>x$.
Hence, $|\{h\in N:\tau(R_h')\geq x\}|+ |\{r\in\{0,\ldots,n\}:\alpha_{S_r}\geq x\}| \geq n+1$.
Thus, $\operatorname{med}(\tau(R'),\widetilde{\alpha}_{R'})=x$.

For the upper endpoint, $|\{h\in N:\tau(R_h')\leq\beta_K\}|=m+1$ and $|\{r\in\{0,\ldots,n\}:\beta_{S_r}\leq\beta_K\}|\geq n-m$.
Similarly, $|\{h\in N:\tau(R_h')\geq\beta_K\}|=n-m-1$
and $|\{r\in\{0,\ldots,n\}:\beta_{S_r}\geq\beta_K\}|\geq m+2$.
Hence, $\operatorname{med}(\tau(R'),\widetilde{\beta}_{R'})=\beta_K$.
Therefore, $F(R')=[x,\beta_K]$.

Since $\tau(R_i)=x$, we have $b_{R_i}(F(R))=t$ and $b_{R_i}(F(R'))=x$.
Moreover, $w_{R_i}(F(R))=w_{R_i}(F(R'))=\beta_K$.
Since $xP_i t$, it follows that $F(R')P_i^{BW}F(R)$.
For voter $j$, since $\tau(R_j)=t$ and $t\in F(R)\cap F(R')$, we have $b_{R_j}(F(R))=b_{R_j}(F(R'))=t$.
Moreover, since $xP_j\beta_K$, we have $w_{R_j}(F(R))=w_{R_j}(F(R'))=\beta_K$.
Hence, $F(R')R_j^{BW}F(R)$.

Thus, the coalition $C=\{i,j\}$ has a profitable deviation, contradicting pairwise strategy-proofness.
Therefore, $\alpha_M=\alpha_K$.

By a symmetric argument, $\beta_M=\beta_K$.
Therefore, $\alpha_M=\alpha_K$ and $\beta_M=\beta_K$.
The case in which $K\subset M$ is symmetric.
Hence the result follows.
\end{proof}

\begin{proof}[Proof of Theorem~\ref{thm_anonymity}]
Let $F=F_G^{\alpha,\beta}\in\mathcal{G}$ satisfy pairwise strategy-proofness.

It is enough to show that there exist $c,d\in A$ such that $\alpha_M=c$ and $\beta_M=d$
for every nonempty coalition $M\subsetneq N$.
Indeed, in that case,
$\widetilde{\alpha}_R=(\alpha_{\emptyset},c,\ldots,c,\alpha_N)$
and
$\widetilde{\beta}_R=(\beta_{\emptyset},d,\ldots,d,\beta_N)$ for every $R\in\mathcal{SP}^n$ and every ordering of voters by their peaks.
Hence, $F$ is anonymous.

We now establish this equality of the parameters.
By Lemma~\ref{lem_interior_flexibility}, there exists a nonempty coalition $K\subsetneq N$ such that $\alpha_K<\beta_K$.
Let $c:=\alpha_K$ and $d:=\beta_K$.

Fix any nonempty coalition $M\subsetneq N$.
There exists a finite sequence of nonempty coalitions $K=S_0,S_1,\ldots,S_q=M$, with $S_r\subsetneq N$ for every $r$, such that $|S_r\triangle S_{r+1}|=1$ for every $r\in\{0,\ldots,q-1\}$.
Since $\alpha_{S_0}=c<d=\beta_{S_0}$, Lemma~\ref{lem_local_rigidity} implies $\alpha_{S_1}=c$ and $\beta_{S_1}=d$.
Thus, $\alpha_{S_1}<\beta_{S_1}$, and the lemma can be applied again.
Proceeding inductively, we obtain $\alpha_{S_r}=c$ and $\beta_{S_r}=d$ for every $r\in\{0,\ldots,q\}$.

In particular, $\alpha_M=c$ and $\beta_M=d$.
Since $M$ was arbitrary, this holds for every nonempty coalition $M\subsetneq N$.
Therefore, $\widetilde{\alpha}_R=(\alpha_{\emptyset},c,\ldots,c,\alpha_N)$ and
$\widetilde{\beta}_R=(\beta_{\emptyset},d,\ldots,d,\beta_N)$ for every $R\in\mathcal{SP}^n$ and every ordering of voters by their peaks.
Hence, $F$ is anonymous.
\end{proof}

The following corollary is derived directly from the proof of Theorem~\ref{thm_anonymity}.
\begin{corollary}
\label{cor_constant_parameters}
Let $F=F_G^{\alpha,\beta}\in\mathcal{G}$ satisfy pairwise strategy-proofness.
Then there exist $c,d\in A$ with $c<d$ such that
$\alpha_M=c$ and $\beta_M=d$
for every nonempty coalition $M\subsetneq N$.
\end{corollary}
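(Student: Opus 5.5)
The corollary is essentially a readout of the proof of Theorem~\ref{thm_anonymity}, so the plan is to retrace that proof and record the two facts it establishes.

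First, let $F=F_G^{\alpha,\beta}\in\mathcal{G}$ be pairwise strategy-proof. By Lemma~\ref{lem_interior_flexibility}, there is a nonempty coalition $K\subsetneq N$ with $\alpha_K<\beta_K$. Set $c:=\alpha_K$ and $d:=\beta_K$, so that $c<d$ holds from the outset. This strict inequality is the only part of the statement not already displayed in the theorem's conclusion, and it comes for free from the choice of $K$.

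Second, fix an arbitrary nonempty $M\subsetneq N$. I would connect $K$ to $M$ by a chain of nonempty proper coalitions, each differing from the next by exactly one voter. Concretely, I would first add the elements of $M\setminus K$ to $K$ one at a time, and then remove the elements of $K\setminus M$ one at a time. This ordering can be arranged so that no intermediate set is empty or equal to $N$. The only delicate point is to check that such a chain always stays inside the nonempty proper coalitions, and I expect this to be the main (though minor) obstacle. The proof of the theorem asserts it, so here is the check.
\begin{itemize}
\item If $K\cup M\neq N$, add the elements of $M\setminus K$ first and then remove the elements of $K\setminus M$. Every intermediate set lies between $M$ (or $K$) and $K\cup M$, so it is nonempty and proper.
\item If $K\cup M=N$, remove the elements of $K\setminus M$ first, passing through $K\cap M$. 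If $K\cap M=\emptyset$, detour by adding one element of $M$ before removing the last element of $K$. This is possible because $n\ge3$ guarantees enough room, and every intermediate set is again nonempty and proper.
\end{itemize}

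Third, apply Lemma~\ref{lem_local_rigidity} inductively along the chain. At each step the current coalition $S_r$ satisfies $\alpha_{S_r}=c<d=\beta_{S_r}$, which is exactly the hypothesis the lemma requires. The lemma then gives $\alpha_{S_{r+1}}=c$ and $\beta_{S_{r+1}}=d$, and the strict inequality propagates to the next step. Concluding at $S_q=M$ yields $\alpha_M=c$ and $\beta_M=d$ for every nonempty $M\subsetneq N$, with $c<d$, which is the corollary.
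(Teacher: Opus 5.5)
Your proposal is correct and follows the paper's own argument, which is the proof of Theorem~\ref{thm_anonymity}: Lemma~\ref{lem_interior_flexibility} supplies $K$ with $\alpha_K<\beta_K$, which gives $c<d$, and Lemma~\ref{lem_local_rigidity} is then propagated along a chain of nonempty proper coalitions whose consecutive members differ by one voter. Your explicit construction of that chain, including the detour that uses $n\ge 3$ when $K$ and $M$ partition $N$, fills in a step the paper only asserts.
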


\subsection{Proof of Theorem~\ref{thm_target}}
\begin{proof}
Suppose first that $F$ satisfies strategy-proofness, voter sovereignty, and pairwise strategy-proofness.

By Proposition~\ref{prop_sp_vs_po}, $F$ satisfies peaks-onliness.
Hence, by Proposition~\ref{prop_klaus}, $F$ is a generalized median correspondence.
Since $F$ also satisfies minimal flexibility, $F\in\mathcal{G}$.

By Corollary~\ref{cor_constant_parameters}, there exist $c,d\in A$ with $c<d$ such that $\alpha_M=c$ and $\beta_M=d$ for every nonempty coalition $M\subsetneq N$.
Voter sovereignty also determines the boundary parameters.
To attain the singleton $\{1\}$, the lower median must equal $1$ at some profile.
Since there are only $n$ voter peaks among the $2n+1$ arguments of the median, at least one $\alpha$-parameter must equal $1$.
By monotonicity, $\alpha_{\emptyset}$ is the largest $\alpha$-parameter, and hence $\alpha_{\emptyset}=1$.
Since $\alpha_{\emptyset}\leq\beta_{\emptyset}\leq1$, it follows that $\beta_{\emptyset}=1$.
Similarly, to attain the singleton $\{0\}$, at least one $\beta$-parameter must equal $0$.
Since $\beta_N$ is the smallest $\beta$-parameter, $\beta_N=0$.
Thus, $0\leq\alpha_N\leq\beta_N$ implies $\alpha_N=0$.

Therefore, for every profile $R$ and every ordering of voters by their peaks, $\widetilde{\alpha}_R=(1,c,\ldots,c,0)$ and $\widetilde{\beta}_R=(1,d,\ldots,d,0)$.
Hence, by the definition of a target set correspondence,
$F=F_T^{c,d}\in\mathcal{T}$.

Conversely, let $F=F_T^{c,d}\in\mathcal{T}$.
Since $F$ is an efficient median correspondence, Proposition~\ref{prop_eff_median} implies that $F$ satisfies strategy-proofness and voter sovereignty.
It remains to show pairwise strategy-proofness.

Let $C\subseteq N$ be a nonempty coalition and let $R'=(R_C',R_{-C})$. Write $F(R)=[l,u]$ and $F(R')=[l',u']$.
We first show that there cannot be a deviation such that
$b_{R_i}(F(R'))R_i b_{R_i}(F(R))$ for every $i\in C$ and
$b_{R_j}(F(R'))P_j b_{R_j}(F(R))$ for some $j\in C$.

Suppose such a deviation exists.
If $\tau(R_j)\in[l,u]$, then $b_{R_j}(F(R))=\tau(R_j)$,
so voter $j$ cannot strictly improve.
Hence, either $\tau(R_j)<l$ or $\tau(R_j)>u$.
By symmetry, suppose that $\tau(R_j)<l$.
Then a strict improvement requires $l'<l$.

If $l=c$, then $l'<c$ implies $\overline{\tau}(R')<c$.
Since $l=c$, we have $\overline{\tau}(R)\geq c$.
Choose $k\in N$ such that $\tau(R_k)\geq c$.
Since $\overline{\tau}(R')<c$, we must have $k\in C$.
Moreover, $b_{R_k}(F(R))\geq c$ whereas every alternative in $F(R')$ is strictly below $c$.
Thus, $b_{R_k}(F(R))P_k b_{R_k}(F(R'))$, a contradiction.

If $l<c$, then by the definition of a target set correspondence, $l=\overline{\tau}(R)$ and $F(R)=\{l\}$.
Since $l'<l<c$, we have $\overline{\tau}(R')<l$.
Choose $k\in N$ such that $\tau(R_k)=l$.
Then $k\in C$, and every alternative in $F(R')$ is strictly below $\tau(R_k)=l$.
Hence, $b_{R_k}(F(R))P_k b_{R_k}(F(R'))$, again a contradiction.
Thus, no coalition can weakly improve every member's best alternative and strictly improve some member's best alternative.

We next show that if $b_{R_i}(F(R'))R_i b_{R_i}(F(R))$
for every $i\in C$, then $F(R)\subseteq F(R')$.
It is enough to show that $l'\leq l$ and $u'\geq u$.

Suppose, to the contrary, that $l'>l$.
If $l<c$, then $l=\overline{\tau}(R)$ and $F(R)=\{l\}$.
Every voter has a peak weakly below $l$, whereas every alternative in $F(R')$ is strictly above $l$.
Hence, $b_{R_i}(F(R))P_i b_{R_i}(F(R'))$
for every $i\in C$, a contradiction.

If $l=c$, then $l'>c$ implies $\underline{\tau}(R')>c$.
Choose $k\in N$ such that $\tau(R_k)\leq c$.
Such a voter exists because $l=c$.
Moreover, $k\in C$, since otherwise $\underline{\tau}(R')\leq\tau(R_k)\leq c$.
Since every alternative in $F(R')$ is strictly above $c$, $b_{R_k}(F(R))P_k b_{R_k}(F(R'))$,
a contradiction.

Finally, if $l>c$, then $l=\underline{\tau}(R)$.
The inequality $l'>l$ implies $\underline{\tau}(R')>l$.
Choose $k\in N$ such that $\tau(R_k)=l$.
Then $k\in C$, and every alternative in $F(R')$ is strictly above $\tau(R_k)$.
Hence, $b_{R_k}(F(R))P_k b_{R_k}(F(R'))$, again a contradiction.

Therefore, $l'\leq l$.
By a symmetric argument, $u'\geq u$.
Hence, $F(R)\subseteq F(R')$.

Now suppose that $C$ has a profitable deviation under the best--worst extension.
Then $b_{R_i}(F(R'))R_i b_{R_i}(F(R))$ and $w_{R_i}(F(R'))R_i w_{R_i}(F(R))$ for every $i\in C$.

By the argument above, $F(R)\subseteq F(R')$.
Therefore, $w_{R_i}(F(R))R_i w_{R_i}(F(R'))$ for every $i\in C$.
Together with $w_{R_i}(F(R'))R_i w_{R_i}(F(R))$,
this implies $w_{R_i}(F(R'))I_i w_{R_i}(F(R))$ for every $i\in C$.

Hence, if some member strictly prefers $F(R')$ to $F(R)$ under the best--worst extension, the strict improvement must come from her best alternative.
Thus, there exists $j\in C$ such that
$b_{R_j}(F(R'))P_j b_{R_j}(F(R))$, while $b_{R_i}(F(R'))R_i b_{R_i}(F(R))$ for every $i\in C$.
This contradicts the first part of the argument.

Therefore, $F$ admits no profitable deviation by any nonempty coalition under the best--worst extension, and hence it is pairwise strategy-proof.
\end{proof}

\subsection{Proof of Theorem~\ref{thm_gpsp}}
\begin{proof}
Let $F\in\mathcal{G}$ satisfy voter sovereignty.
As shown in the proof of Theorem~\ref{thm_target}, voter sovereignty implies $\alpha_{\emptyset}=\beta_{\emptyset}=1$ and $\alpha_N=\beta_N=0$.

Since $F$ is minimally flexible, choose a nonempty coalition $M\subsetneq N$ such that $\alpha_M<\beta_M$ and $\alpha_H=\beta_H$ for every $H\supsetneq M$.

Let $m:=|M|$.
Consider a profile $R\in\mathcal{SP}^n$ such that $\tau(R_h)=0$ for every $h\in M$ and $\tau(R_h)=1$ for every $h\in N\setminus M$.
Then $F(R)=[\alpha_M,\beta_M]$.

Fix $i\in M$ and $j\in N\setminus M$.
Choose $z\in A$ such that $\alpha_M<z<\beta_M$.
Let $R_i'$ and $R_j'$ be preferences such that $\tau(R_i')=\tau(R_j')=z$, and write
$R':=(R_i',R_j',R_{-\{i,j\}})$.
At $R'$, we have $\tau(R'_h)=0$ for every $h\in M\setminus\{i\}$, $\tau(R'_i)=\tau(R'_j)=z$, and
$\tau(R'_h)=1$ for every $h\in N\setminus(M\cup\{j\})$.

Choose a permutation $\pi:N\to N$ such that $\{\pi(1),\ldots,\pi(m-1)\}=M\setminus\{i\}$, $\pi(m)=i$, $\pi(m+1)=j$, and $\{\pi(m+2),\ldots,\pi(n)\} = N\setminus(M\cup\{j\})$.
For each $r\in\{0,\ldots,n\}$, let $S_r:=\{\pi(1),\ldots,\pi(r)\}$.
Then $S_m=M$ and $S_{m+1}=M\cup\{j\}$.

Write $F(R')=[l',u']$.
For every $r\leq m$, we have $S_r\subseteq M$, and hence
$\alpha_{S_r} \geq \alpha_M$ by monotonicity.
Therefore,
$|\{r\in\{0,\ldots,n\}:\alpha_{S_r}<\alpha_M\}| \leq n-m$.
Moreover, $|\{h\in N:\tau(R_h')<\alpha_M\}| \leq m-1$.
Hence, $|\{h\in N:\tau(R_h')<\alpha_M\}| + |\{r\in\{0,\ldots,n\}:\alpha_{S_r}<\alpha_M\}|
\leq n-1$.
Thus, $l'\geq\alpha_M$.

For the upper endpoint, if $r\geq m+1$, then $S_r\supsetneq M$.
By the choice of $M$, $\alpha_{S_r}=\beta_{S_r}$.
Moreover, by monotonicity, $\alpha_{S_r}\leq\alpha_M<z$.
Hence, $\beta_{S_r} < z$ for every $r\geq m+1$.
Therefore, $|\{h\in N:\tau(R_h')\leq z\}| + |\{r\in\{0,\ldots,n\}:\beta_{S_r}\leq z\}| \geq (m+1)+(n-m) =n+1$.
On the other hand, for every $r\leq m$, since $S_r\subseteq M$, monotonicity implies $\beta_{S_r}\geq\beta_M>z$.
Hence, $|\{h\in N:\tau(R_h')\geq z\}| + |\{r\in\{0,\ldots,n\}:\beta_{S_r}\geq z\}| \geq (n-m+1)+(m+1) \geq n+1$.
Thus, $u'=z$.

Therefore, $F(R')=[l',z]$ with $l'\geq\alpha_M$.
Since $\tau(R_i)=0$, we have $w_{R_i}(F(R))=\beta_M$ and $w_{R_i}(F(R'))=z$.
Since $z<\beta_M$, single-peakedness implies $w_{R_i}(F(R'))P_iw_{R_i}(F(R))$.
Hence, $F(R')P_i^{P}F(R)$.

Similarly, since $\tau(R_j)=1$, we have $w_{R_j}(F(R))=\alpha_M$ and $w_{R_j}(F(R'))=l'$.
Since $l'\geq\alpha_M$, single-peakedness implies
$w_{R_j}(F(R'))R_jw_{R_j}(F(R))$.
Hence, $F(R')R_j^{P}F(R)$.

Thus, the coalition $\{i,j\}$ has a profitable deviation under the pessimistic extension, contradicting pairwise P-strategy-proofness.
Therefore, no $F\in\mathcal{G}$ satisfying voter sovereignty is pairwise P-strategy-proof.
\end{proof}

\section*{Declaration}

\subsection*{Conflict of Interest}
The author declares  that there are no conflicts of interests.

\subsection*{Funding}
This research did not receive any specific grant from funding agencies in the public, commercial, or not-for-profit sectors.

\subsection*{Data availability}
No data was used for the research described in the article.

\subsection*{Disclosure on the use of AI}
During the preparation of this work, the author used ChatGPT for language editing and to identify possible
mathematical or expository issues.
The author takes full responsibility for the content of the manuscript.

\bibliography{references_target}

\end{document}